\newtheorem{rem}{Remark}
\newtheorem{theo}{Theorem}
\newtheorem{prop}{Proposition}
\newtheorem{defin}{Definition}
\newtheorem{lem}{Lemma}
\newtheorem{cor}{Corollary}
\newtheorem{intro_theo}{Theorem}
\newcommand\blfootnote[1]{%
  \begingroup
  \renewcommand\thefootnote{}\footnote{#1}%
  \addtocounter{footnote}{-1}%
  \endgroup
}
\begin{document}

\author{Dimitris S. Papailiopoulos and  Alexandros G. Dimakis\\
The University of Texas at Austin\\
\texttt{dimitris@utexas.edu}, \texttt{dimakis@austin.utexas.edu}
}

\title{
Locally Repairable Codes
}
\maketitle

\begin{abstract}

Distributed storage systems for large-scale applications typically use replication for reliability. 
Recently, erasure codes were used to reduce the large storage overhead, while increasing data reliability.
A main limitation of off-the-shelf erasure codes is their high-repair cost during single node failure events.
A major open problem in this area has been the design of codes that {\it i)} are repair efficient and {\it ii)} achieve arbitrarily high data rates. 

In this paper, we explore the repair metric of {\it locality}, which corresponds to the number of disk accesses required during a {\color{black}single} node repair.
Under this metric we characterize an information theoretic trade-off that binds together locality, code distance, and the storage capacity of each node.
We show the existence of optimal {\it locally repairable codes} (LRCs) that achieve this trade-off.
The achievability proof uses a locality aware  flow-graph gadget which leads to a randomized code construction.
Finally, we present an optimal and explicit LRC  that achieves arbitrarily high data-rates.
Our locality optimal construction is based on simple combinations of Reed-Solomon blocks.

\end{abstract}

\blfootnote{Parts of this work were presented in \cite{papailiopoulos2012locally}.}
\blfootnote{This research was supported in part by NSF Career Grant CCF-1055099 and research gifts by Intel, Microsoft Research, and Google Research.}
\blfootnote{A significant portion of this work has been completed while the authors were with the department of Electrical Engineering at the University of Southern California.}

\section{Introduction}
\label{sec:intro}

Traditional architectures for large-scale storage rely on systems that provide reliability through block replication.
The major disadvantage of replication is the large storage overhead.
As the amount of stored data is growing faster than hardware infrastructure, this becomes a major data center cost bottleneck. 
{\it Erasure coding} techniques achieve higher data reliability with considerably smaller storage overhead~\cite{weatherspoon2002erasure}. 
For that reason various erasure codes are currently implemented and deployed in production storage clusters. 
Applications where coding techniques are being currently deployed include cloud storage systems like Windows Azure~\cite{huang2012erasure}, big data analytics clusters 
({\it e.g.}, the Facebook Analytics Hadoop cluster~\cite{sathiamoorthy2013xoring}), archival storage systems, and peer-to-peer storage systems like Cleversafe and Wuala.

It is now well understood that classical erasure codes (such as Reed-Solomon {\color{black}codes}) are highly suboptimal for distributed storage settings~\cite{dimakis2010network}.
For example, the Facebook analytics Hadoop cluster discussed in~\cite{sathiamoorthy2013xoring}, deployed Reed-Solomon encoding for 8\% of the stored data. 
This 8\% of the stored data was reported to generate repair traffic that was approximately equal to 20\% of the total network traffic. 
The fact that traditional erasure codes are not optimized for {\color{black}node repairs}, is the main reason why they are not widely deployed in current storage systems. 

Three major repair cost metrics have been identified in the recent literature: 
{\it i)} the number of bits communicated in the network, also known as the {\it repair-bandwidth} \cite{dimakis2010network, rashmi2011optimal,suh2011exact, tamo2011mds, cadambe2011optimal, papailiopoulos2011repair}, 
{\it ii)} the number of bits read during each repair, {\color{black}{\it i.e.,}} the {\it disk-I/O} \cite{khan2011search,tamo2011mds}, and 
{\it iii)} more recently the number of nodes that participate in the repair process, also known as {\it repair locality}. 
Each of these metrics is more relevant for different systems and their fundamental limits are not completely understood. 

In this work, we focus on the metric of repair locality~\cite{gopalan2011locality,oggier2011self}. Consider a code of length $n$, with $k$ information symbols. 
A symbol $i$ has locality $r_i$, if it can 
be reconstructed by accessing $r_i$ other symbols in the code. 
For example, in an $(n,k)$ maximum-distance separable (MDS) code, 
every symbol has trivial locality $k$. We will say that a systematic code has \textit{information-symbol locality} $r$,
if all the $k$ information symbols have locality $r$.
Similarly, a code has \textit{all-symbol locality} $r$, if all $n$ symbols have locality $r$.

Different repair metrics optimize alternative objectives which may be useful in various storage systems 
depending on the specific architectures and workloads. 
Locality allows repairs by communicating with a very small subset of nodes. 
However, codes with {\color{black}small} locality are suboptimal in terms of the repair bandwidth and disk-I/O metrics.
 Further, as we show in this paper, 
LRCs must {\color{black}either sacrifice some code distance,} or use more storage compared to MDS codes to achieve {\color{black}low} locality. 
A recent alternative family of storage codes that seems to be practically applicable and offers higher storage efficiency and {\color{black}small repair bandwidth}
was proposed in~\cite{rashmi2013solution}.
One important benefit of codes with small locality is that their simple designs are easily implementable in distributed file systems like Hadoop~\cite{sathiamoorthy2013xoring} and Windows Azure Storage~\cite{huang2012erasure}. Further, codes with low locality were recently deployed in production clusters~\cite{huang2012erasure} and operating systems like Windows Server and Windows $8.1$~\cite{HuangSDC}.

Codes with {\color{black}small} locality were initially introduced in~\cite{han2007reliable,huang2007pyramid}.
Gopalan \textit{et al.}~\cite{gopalan2011locality} pioneered the theoretical study of locality by discovering a trade-off between code distance and information-symbol locality.
In~\cite{gopalan2011locality} the trade-off was obtained for scalar linear codes, {\it i.e.}, codes where each source and coded symbol is represented by a scalar over some finite field{\color{black}, and the each coded symbol is a linear function of the source symbols}.
{\color{black}Bounds on the code-distance for a given locality as well as code constructions were presented}  in parallel and subsequent works \cite{prakash2012optimal, kamath2012codes,rawat2012optimal,tamo2013optimal}.
Some works extend the designs and distance bounds to the case where repair bandwidth and locality are jointly optimized, under multiple local failures \cite{kamath2012codes,rawat2012optimal}, and under security constraints \cite{rawat2012optimal}.

{\bf Our Contributions:} We generalize the prior work of \cite{gopalan2011locality} and provide a distance bound that is universal{\color{black}: it holds for both linear and nonlinear codes, while it allows both scalar and vector code designs, where input and output symbols can have arbitrary sizes.}
{\color{black}We proceed to show that this information theoretic trade-off is achievable, when $r+1$ divides the length of the code $n$.
We conclude with presenting explicit constructions for codes with all-symbol locality}.
 We provide a formal definition of an LRC and then proceed with stating  {\color{black}our three contributions in more detail}.

\begin{defin}
An $(n,r,d,M,\alpha)$-LRC is a code that takes a file of size $M$ bits, encodes it in $n$ coded symbols of size $\alpha$ bits, and any of these $n$ coded symbols can 
be reconstructed by accessing and processing at most $r$ other symbols.
Moreover, the minimum-distance of the code is d, {\it i.e.}, the file of size $M$ can be reconstructed by accessing any $n-d+1$ of the $n$ coded symbols.\footnote{
{\color{black}In comparison, the definition of an information-symbol (or all-symbol) locality code in \cite{gopalan2011locality} assumes that
 the encoding is a linear mapping from $k$ input to $n$ output symbols. 
 Moreover, the input and  output symbols are  assumed to be of the same size, i.e., of the same number of bits.
Our definition is more general:  both linear and non-linear codes are allowed, and the size of the input and output symbols can be different.}}
\end{defin}

\noindent Our three contributions follow:
\\
\noindent{{\it 1) An information theoretic bound on code distance $d$}:}
We present a bound that binds together the code distance $d$, the locality $r$, and the size of each coded symbol $\alpha$ ({\it i.e.}, the storage capacity of each node).
The bound is information theoretic and covers all codes, linear or nonlinear, and reads as follows:
\begin{intro_theo}
\label{theo:distance}
An $(n,r,d,M,\alpha)$-LRC, as defined above, has distance $d$ that is bounded as 
\begin{align}
d&\le n-\left\lceil\frac{M}{\alpha}\right\rceil-\left\lceil\frac{M}{r\alpha}\right\rceil+2.\nonumber
\end{align}
\end{intro_theo}
\noindent
We establish our bound using an impossibility result for values of distance $d$ larger than the above.
The impossibility result uses an algorithmic proof similar to \cite{gopalan2011locality} and counting arguments on the entropy of subsets of coded symbols.
{\color{black}We would like to note that when we set $M=k$ and $\alpha =1$, which corresponds to the scalar-code regime, we obtain the same bound as \cite{gopalan2011locality}, that is 
\begin{align}
d&\le n-k-\left\lceil\frac{k}{r}\right\rceil+2.\nonumber
\end{align}}
\vspace{0.1cm}
\\
\noindent
{\it 2) Achievability of the distance bound  {\color{black}when} $(r+1)$ divides $n$:}
\begin{intro_theo}
Let $(r+1)|n$ and {\color{black}$r\le n-d$}. 
Then, there exist $(n,r,d,M,\alpha)$-LRCs with minimum code distance 
$$d=n-\left\lceil\frac{M}{\alpha}\right\rceil-\left\lceil\frac{M}{ra}\right\rceil+2,$$
over a sufficiently large finite field.
\label{theo:achieve}
\end{intro_theo}
\noindent 
We prove the achievability using a novel information flow-graph gadget, in a similar manner to \cite{dimakis2010network}.
In contrast to \cite{dimakis2010network}, the flow-graph that we construct is finite, locality aware, and simpler to analyze.
The existence of  $(n,r,d,M,\alpha)$-LRCs is established through a capacity achieving scheme on a multicast network \cite{ho2006random}, specified by the aforementioned flow-graph.
The obtained LRCs are vector codes: codes where each source and coded symbol is represented as a vector (not necessarily of the same length).
{\color{black}This is yet another case where vector codes are employed to achieve an optimal trade-off. In \cite{dimakis2010network}, the codes achieving the optimal repair bandwidth-storage trade-off are also vector linear.}
\vspace{0.1cm}
\\ 
\noindent{{\it 3) Explicit code constructions of optimal LRCs:}}
We construct explicit LRCs for the following set of coding parameters:
$$\left(n,r,d=n-k+1, M, \alpha = \frac{r+1}{r}\cdot \frac{M}{k}\right), \text{ such that } (r+1)|n.$$
Our codes are optimal when $(r+1)\nmid k$.
The above parameters correspond to codes with all-symbol locality $r$ and rate $\left(1+\frac{1}{r}\right)\cdot \frac{k}{n}$, where any $k$ coded symbols suffice to recover the file.
Our designs are vector-linear and each symbol stored requires only $r\cdot O( \log(n))$ bits in its representation.
We show that these codes not only have optimal locality, but also admit simple node repairs based on XORs.

The remainder of this paper is organized as follows.
In Section~\ref{sec:preliminaries}, we provide the coding theoretic definitions used in the {\color{black}subsequent} sections.
In Section~\ref{sec:distance_bound}, we provide a distance bound for codes with all-symbol locality.
In Section~\ref{sec:lrc_achievability}, we prove that this bound is achievable using {\color{black}random vector codes}.
In Section~\ref{sec:lrc_constructions}, we provide an explicit LRC construction, and discuss its properties.

\section{Preliminaries}
\label{sec:preliminaries}

A way to calculate the code distance of a linear code is through its generator matrix: calculating the minimum distance is equivalent to finding the largest set of columns of the generator matrix that are not full-rank \cite{gopalan2011locality, kamath2012codes}.
In the following, we use entropy to characterize the distance of a code.
This is the key difference to the related works in \cite{gopalan2011locality}, \cite{kamath2012codes}, which provide results only for linear codes.
The use of the entropy of coded symbols ensures that our bounds are universal: they hold for linear and nonlinear codes, for any file and coded symbol size, irrespective of a vector or scalar representation.
The main properties that we exploit here are the following: 
entropy is oblivious to the encoding process (linear or nonlinear), it can accommodate different input or output symbol sizes, and different  symbol representations (scalar or vector).
We will now proceed with our technical discussion.

Let a file of size $M$ bits\footnote{
The $M$ file elements can also be elements of an appropriate $q$-ary alphabet, for any $q\ge 2$. 
We keep the discussion in bits for simplicity.} 
be represented as an $M$-dimensional vector ${\bf x}$, whose elements can be considered as independent and identically distributed (i.i.d.) uniform random variables, each drawn from a Galois Field $\mathbb{GF}(2)$, referred to as $\mathbb{F}_2$ for convenience.\footnote{
We assume that ${\bf x}$ consists of  i.i.d. uniform random variables, since all $M$ bits are assumed to hold the same amount of useful information (are of equal entropy).}
The (binary) entropy of ${\bf x}$ will then be\footnote{If the base alphabet was $q$-ary instead of binary, then we would need to use $q$-ary entropies.}
\begin{equation}
H({\bf x}) =
 M.
\end{equation}
Moreover, 
let $G:\mathbb{F}_2^M\mapsto \mathbb{F}_2^{ n\cdot \alpha}$ be an encoding (generator) function, that takes as input the file ${\bf x}$ and maps it to $n$ coded symbols, each of size $\alpha$:
\begin{equation}
G({\bf x}) = {\bf y} = [Y_1\ldots Y_n]\nonumber
\end{equation}
where each encoded symbol has entropy
\begin{equation}
H(Y_i)\le\alpha,
\nonumber
\end{equation}
for all $i\in[n]$, where $[n]=\{1,\ldots, n\}$.
In the following, we {\color{black}frequently} refer to $\alpha$ as the storage cost per coded symbol.

The generator function $G$ defines an $n$-length code $\mathcal{C}$.
The {\color{black}effective data} rate of the code is the ratio of the total source entropy to the aggregate entropy of the stored encoded information
\begin{equation}
R = \frac{H({\bf x})}{\sum_{i=1}^n H(Y_i)}.\nonumber
\end{equation}
{\color{black}We continue with a definition for the minimum code distance.}
\begin{defin}[Minimum code distance]
The minimum distance $d$ of the code $\mathcal{C}$ is equal to the minimum number of erasures of {\color{black}coded} symbols in ${\bf y}$ after which the entropy of the non-erased symbols is strictly less than $M$, that is,
\begin{equation}
d = \min_{H\left(\{Y_1,\ldots,Y_n\}\backslash \mathcal{E}\right)<M}|\mathcal{E}| \nonumber
\end{equation}
where $\mathcal{E}\in 2^{\{Y_1,\ldots, Y_n\}}$ and $2^{\{Y_1,\ldots, Y_n\}}$ is the power set of the symbols in $\{Y_1,\ldots, Y_n\}$.
\end{defin}
\noindent In other words, when a code has minimum distance $d$, {\color{black}this means that} there is sufficient entropy after any $d-1$ erasures of coded symbols to reconstruct the file.
The above definition can be restated in its dual form:
{\color{black}the minimum distance $d$ of the code $\mathcal{C}$ is equal to the length of the code $n$, minus the maximum number of coded symbols in ${\bf y}$  that cannot reconstruct the file, that is, }
\begin{equation}
d =n-\max_{H(\mathcal{S})<M}|\mathcal{S}|\nonumber
\end{equation}
where  $\mathcal{S}\in 2^{\{Y_1,\ldots, Y_n\}}$.
\begin{rem}
Observe that the above distance definition applies to linear, or nonlinear codes, and to any length of input and output symbols.
\end{rem}

\noindent 
We continue with the definition of  repair locality.
\begin{defin}[Repair Locality]
A coded symbol $Y_i$,  $i\in[n]$, is said to have repair locality $r$, if there exists at least one set of coded symbols with indices in $\mathcal{R}(i)\subseteq [n]\backslash \{i\}$, call it $Y_{\mathcal{R}(i)}$, of cardinality $|\mathcal{R}(i)|=r$, and a function $g_i:\mathbb{F}_2^{r\cdot \alpha}\rightarrow\mathbb{F}_2^\alpha$, 
such that $Y_i$ can be expressed as a function of these $r$ coded symbols, {\it i.e.}, 
$Y_i = g_i (Y_{\mathcal{R}(i)})$.
\end{defin}

\section{A Universal bound between code distance, locality, and storage cost}
\label{sec:distance_bound}
In this section, we provide an information theoretic bound for locally repairable codes.
Specifically, we answer the question:
what is the maximum possible distance $d$ of a code that has locality $r$?
We provide a universal upper bound on the minimum distance of a code of length $n$, with all-symbol locality $r$, where each coded symbol has size $\alpha$.
We do so by an algorithmic proof, in a similar manner to~\cite{gopalan2011locality}.
Deriving such a distance bound reduces {\color{black}to lower bounding} the cardinality of the largest set $\mathcal{S}$ of coded symbols whose entropy is less than $M$.

In our proof, the only structural property that we use, is the fact that every {\color{black}coded} symbol has locality $r$.
Specifically, if a code $\mathcal{C}$ has locality $r$, then for each of its coded symbols, say $Y_i$, there exist at least one group of at most $r$ other coded symbols $Y_{\mathcal{R}(i)}$ that can reconstruct $Y_i$, for $i\in[n]$.
We define as 
$$\Gamma(i) = \{i,\mathcal{R}(i)\}$$
 a set of $r+1$ coded symbols that has the property
\begin{equation}
H(Y_{\Gamma(i)})=H(Y_i,Y_{\mathcal{R}(i)})=H(Y_{\mathcal{R}(i)}) \le r\alpha, \nonumber
\end{equation}
for all $i\in[n]$; the above comes due to the functional dependencies induced by locality.
We refer to such a set of coded symbols as an {\it $(r+1)$-group}.
The theorem and its proof follow.

\begin{theo}
\label{theo:distance}
An $(n,r,d,M,\alpha)$-LRC has minimum distance $d$ that is bounded as
\begin{align}
d&\le n-\left\lceil\frac{M}{\alpha}\right\rceil-\left\lceil\frac{M}{r\alpha}\right\rceil+2\nonumber.
\end{align}
\end{theo}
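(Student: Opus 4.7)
The plan is to rewrite the distance bound in its dual form,
\[
d \;=\; n - \max\{|\mathcal{S}| : \mathcal{S}\subseteq\{Y_1,\ldots,Y_n\},\ H(Y_\mathcal{S}) < M\},
\]
so that proving the theorem reduces to exhibiting a subset $\mathcal{S}$ of coded symbols of cardinality at least $\lceil M/\alpha\rceil + \lceil M/(r\alpha)\rceil - 2$ whose joint entropy stays strictly below $M$. The construction is algorithmic, in the spirit of \cite{gopalan2011locality}, but the bookkeeping must be done in terms of entropy rather than matrix rank so that the argument covers nonlinear codes and vector symbols of arbitrary size.

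Concretely, I would run a greedy procedure: start with $\mathcal{S}_0 = \emptyset$; at step $j$, pick a coded symbol $Y_i\notin\mathcal{S}_{j-1}$ and consider its locality group $\Gamma(i)$. If the union $\mathcal{S}_{j-1}\cup\Gamma(i)$ still has entropy strictly less than $M$, absorb the whole group by setting $\mathcal{S}_j = \mathcal{S}_{j-1}\cup\Gamma(i)$; once no full group fits, switch to adding individual symbols, one at a time, until no further symbol can be absorbed without pushing the entropy up to $M$. Each group-merge grows $|\mathcal{S}|$ by between $1$ and $r+1$ while adding at most $r\alpha$ to the joint entropy, because $H(Y_{\Gamma(i)}) \le r\alpha$ by the definition of an $(r+1)$-group; each singleton-merge adds $1$ to $|\mathcal{S}|$ and at most $\alpha$ to the entropy, since $H(Y_i) \le \alpha$.

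For the counting, consider first the idealized case of pairwise disjoint groups: the group phase then accepts $g = \lceil M/(r\alpha)\rceil - 1$ groups consuming entropy $\le g r\alpha$ and contributing $g(r+1)$ symbols; the singleton phase then accepts up to $\lceil M/\alpha\rceil - gr - 1$ further symbols, for a grand total of $g(r+1) + \lceil M/\alpha\rceil - gr - 1 = g + \lceil M/\alpha\rceil - 1 = \lceil M/\alpha\rceil + \lceil M/(r\alpha)\rceil - 2$, which is exactly the cardinality required. The two ceilings and the additive $+2$ emerge from the strict inequality $H(Y_\mathcal{S}) < M$ combined with the sub-additivity $H(Y_\mathcal{S}) \le |\mathcal{S}|\alpha$.

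The main obstacle is that the $(r+1)$-groups supplied by the code need not be pairwise disjoint, so a particular merge may contribute fewer than $r+1$ fresh symbols. I expect the technical heart of the proof to be an amortization argument showing that whenever a merged group overlaps the current $\mathcal{S}$ its entropy contribution also shrinks proportionally, so that the ratio of new symbols gained to entropy spent never drops below the rate assumed in the idealized count, and exactly one ``bonus'' symbol over the pure singleton budget is still earned for each full $(r+1)$-group absorbed; the step-count then has to be finished by tracking the ceilings carefully under the strict-entropy-inequality stopping rule.
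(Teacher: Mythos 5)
Your proposal follows the same outline as the paper's proof---dual characterization of $d$, greedy group-absorbing algorithm, entropy bookkeeping---and your disjoint-group tally lands on the correct target. But you have flagged, without actually supplying, the one inequality that makes the argument survive overlaps. The missing lemma is the per-step bound $h_i \le (s_i-1)\alpha$, where $s_i = |\mathcal{S}_i| - |\mathcal{S}_{i-1}|$ and $h_i = H(\mathcal{S}_i) - H(\mathcal{S}_{i-1})$. It is proved by anchoring each step on a \emph{fresh} symbol $Y_j \notin \mathcal{S}_{i-1}$, setting $\mathcal{B} = \mathcal{S}_{i-1} \cap Y_{\mathcal{R}(j)}$, and observing
\[
H(\mathcal{S}_{i-1}\cup Y_{\Gamma(j)}) = H\bigl(\mathcal{S}_{i-1}\cup (Y_{\mathcal{R}(j)}\setminus\mathcal{B})\bigr) \le H(\mathcal{S}_{i-1}) + |Y_{\mathcal{R}(j)}\setminus\mathcal{B}|\,\alpha = H(\mathcal{S}_{i-1}) + (s_i-1)\alpha,
\]
where the first equality uses that $Y_j$ is a deterministic function of $Y_{\mathcal{R}(j)}$ and the inequality is subadditivity. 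This gives exactly one free symbol per step with a fresh anchor, \emph{regardless of how much the group overlaps} $\mathcal{S}_{i-1}$. That is different from your phrasing ``one bonus symbol earned for each full $(r+1)$-group absorbed,'' which suggests the bonus vanishes when the absorption is partial; read that way the amortization would undercount.

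A second loose end: the stopping analysis. The paper splits into two termination cases---(i) no non-empty subset of a fresh group can be added, versus (ii) the last step adds a strictly truncated subset $\mathcal{T}$ of a group. In case (ii) the last step loses the bonus, $h_l \le s_l\alpha$ rather than $(s_l-1)\alpha$, and the loss is exactly compensated by a stronger lower bound on the iteration count, $l \ge \lceil M/(r\alpha)\rceil$ as opposed to $l \ge \lceil (M-\alpha)/(r\alpha)\rceil$ in case (i). Your two-phase sketch (whole groups, then singletons) corresponds roughly to case (ii), but it does not carry out this accounting, and the claim that the singleton phase accepts ``up to $\lceil M/\alpha\rceil - gr - 1$ further symbols'' depends on a lower bound $H(\mathcal{S}_l)\ge M-\alpha$ at termination that you should state and justify (if $H(\mathcal{S}_l)\le M-\alpha-\epsilon$ another symbol could always be added). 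Filling in the lemma above and this case analysis would complete the proof.
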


\begin{proof}
In this proof we use some of the algorithmic techniques that {\color{black}were introduced} in~\cite{gopalan2011locality}.
Our aim is to lower bound the cardinality of a set $\mathcal{S}$, consisting of the {\it maximum} number of coded symbols with entropy $H(\mathcal{S})$ {\it strictly less} than the filesize $M$.
This bound will be equivalent to an upper bound on the minimum code distance $d$, since 
\begin{equation}
d =n-\max_{
\begin{smallmatrix}
\mathcal{S}\subset\{Y_1,\ldots, Y_n\}\\
H(\mathcal{S})<M
\end{smallmatrix}
}|\mathcal{S}|.\nonumber
\end{equation}

To build such a maximally sized set described above, we need to collect as many symbols as possible that have as small joint entropy as possible.
Subsets of coded symbols that have many dependencies (small joint entropy) are preferred to subsets of the the same cardinality, but of larger joint entropy.
The only structural information about the code that we can exploit to introduce dependencies is that of repair locality: every repair group $Y_{\Gamma(i)}$ has joint entropy at most $r\cdot \alpha$, while an arbitrary set of $r+1$ symbols can have joint entropy up to $(r+1)\cdot \alpha$.

We build the set $\mathcal{S}$ in an algorithmic way through iterative steps.
The algorithm picks as many  $(r+1)$-groups as possible, until it exits.
The algorithm that builds the set follows in Fig.~\ref{fig:buildS}.
We proceed with analyzing the size and entropy of the sets that it can possibly construct.
{\color{black}The goal of our analysis is to lower bound the size of the set $\mathcal{S}_l$ that the algorithm can  possibly produce.
This will tell us that no matter how the code is constructed, its minimum distance cannot be more than $n-|\mathcal{S}_l|$.
}

\begin{figure}[h]
\begin{center}
\begin{tabular}{|c|l|}
\hline
step &\\
\hline
1 & Set $\mathcal{S}_0=\emptyset$ and $i=1$\\
2 & \texttt{WHILE} $H(\mathcal{S}_{i-1})<M$\\
3 & \hspace{0.5cm}Pick a coded symbol $Y_j\notin\mathcal{S}_{i-1}$\\
4 & \hspace{0.5cm}\texttt{IF} $H(\mathcal{S}_{i-1}\cup\{Y_{\Gamma(j)}\})<M$\\
5& \hspace{1cm}\texttt{set} $\mathcal{S}_i=\mathcal{S}_{i-1}\cup Y_{\Gamma(j)}$\\
6& \hspace{0.5cm}\texttt{ELSE IF} $H(\mathcal{S}_{i-1}\cup\{Y_{\Gamma(j)}\})\ge M$ \\
7& \hspace{1cm} $\mathcal{T} = \underset{\mathcal{T}'\subset \Gamma(j); H(Y_{\mathcal{T}'}\cup\mathcal{S}_{i-1})<M}{\arg\max} |\mathcal{T}'|$ \\
8& \hspace{1cm}\texttt{IF} $\mathcal{T} = \emptyset$ \\
9& \hspace{1.5cm} \texttt{EXIT} \\
10& \hspace{1cm}\texttt{ELSE}\\
11& \hspace{1.5cm} \texttt{set} $S_i = S_{i-1}\cup Y_\mathcal{T}$ \\
12& \hspace{1.5cm} \texttt{EXIT} \\
13& \hspace{0.5cm}$i=i+1$\\
\hline
\end{tabular}
\end{center}
\caption{The algorithm that builds set $\mathcal{S}$.}
\label{fig:buildS}
\end{figure}

We denote the collection of coded symbols at each step of the iteration as $\mathcal{S}_i$.
At each step $i$, the difference in cardinality between $\mathcal{S}_i$ and $\mathcal{S}_{i-1}$ is denoted as
\begin{equation}
s_i = |\mathcal{S}_i|-|\mathcal{S}_{i-1}|
\end{equation}
and the difference between the entropy of the two sets as
\begin{equation}
h_i = H(\mathcal{S}_i)-H(\mathcal{S}_{i-1}).
\end{equation}
The algorithm exits before reaching $H(\mathcal{S}_i)\ge M$.
There are two ways that the algorithm terminates: \\
{\it i)} it either collects $(r+1)$-groups until it exits at line  $9$, or\\ 
{\it ii)} the last subset of coded symbols that is added to $\mathcal{S}_{i-1}$ is smaller than $r+1$ and the algorithm exits at line $12$, after collecting some subset of an $(r+1)$-group, such that $H(\mathcal{S}_i)<M$ is not violated.\\
Let us denote by $l$ the {\it last} iteration of the algorithm during which a new {\it non-empty} set of coded symbols is added to the {\color{black}current} set of coded symbols.
We shall now proceed with lower bounding $|\mathcal{S}_l|$.
\\ 
\\
\underline{Case {\it i) The algorithm exits at line $9$:}}\\
Since the algorithm exits at $9$, this means that its last iteration is the $(l+1)$-st, where no more symbols are added.
Again, we denote by $l$ the {\it last} iteration during which our set of coded symbols is expanded by a non-empty set.
First observe that, for any $1\le i\le l$, we have
\begin{equation}
1 \le s_i \le r+1 \label{eq:si_bound}
\end{equation}
since at each iteration the algorithm augments the set $\mathcal{S}_{i-1}$ by at least one new symbol, {\it i.e.}, $Y_j$, 
which is always possible since $H(\mathcal{S}_{i-1})<M$, for all $i\le l$ and $H(Y_1,\ldots, Y_n)=M$.
Then, $s_i\le r+1$ is a consequence of the fact that 
$$|\mathcal{S}_{i}|=|\mathcal{S}_{i-1}\cup Y_{\Gamma(i)}|\le |\mathcal{S}_{i-1}|+|Y_{\Gamma(i)}|\le|\mathcal{S}_{i-1}|+r+1.$$
We also have that
\begin{equation}
h_i \le (s_i-1)\alpha.
\label{eq:hi_bound}
\end{equation}
To see why the above is true, let $\mathcal{S}_{i-1} = \mathcal{A}\cup \mathcal{B}$, where $\mathcal{B}=\mathcal{S}_{i-1}\cap Y_{\mathcal{R}(j)}$ is the subset of symbols from $Y_{\mathcal{R}(j)}$ that are already in $\mathcal{S}_{i-1}$ ($\mathcal{B}$ can be empty if no symbols from $\mathcal{R}(j)$ are in $\mathcal{S}_{i-1}$). 
Then, 
\begin{align}
H(\mathcal{S}_i)=H(\mathcal{S}_{i-1} \cup Y_{\Gamma(j)})&=H(\mathcal{S}_{i-1}\cup \{Y_{\mathcal{R}(j)}\backslash \mathcal{B}\}) \le H(\mathcal{S}_{i-1})+H(Y_{\mathcal{R}(j)}\backslash \mathcal{B}) \nonumber\\
&\le H(\mathcal{S}_{i-1})+|Y_{\mathcal{R}(j)}\backslash \mathcal{B}|\alpha \\
&= H(\mathcal{S}_{i-1})+(s_i-1)\alpha\nonumber,
\end{align}
where the second equality comes from the fact that $Y_j$ is a function of some symbols in $\mathcal{S}_{i-1}\cup \{Y_{\mathcal{R}(j)}\backslash \mathcal{B}\}$, due to locality, and the last equality is due to 
$$s_i=|\mathcal{S}_{i}|-|\mathcal{S}_{i-1}|= |Y_{\Gamma(j)}\backslash \mathcal{B}|=|Y_{\mathcal{R}(j)}\backslash \mathcal{B}|+1.$$
From \eqref{eq:hi_bound}, we also obtain
\begin{equation}
\alpha\cdot s_i\ge h_i+\alpha\label{eq:si_hi_bound}.
\end{equation}

Now, we can start bounding the size of  $\mathcal{S}_l$ as follows
\begin{align}
\alpha |\mathcal{S}_l|& = \alpha\sum_{i=1}^{l}s_i\overset{\eqref{eq:si_hi_bound}}{\ge} \sum_{i=1}^{l} \left(h_i+\alpha\right) = \left(\sum_{i=1}^{l} h_i\right)+l\cdot \alpha=H(\mathcal{S}_l)+l\cdot \alpha. \label{eq:Slbound1}
\end{align}
We continue with lower bounding the two quantities  in \eqref{eq:Slbound1}: $H(\mathcal{S}_l)$ and $l\cdot \alpha$.
First observe that since the algorithm is exiting, it means that the aggregate entropy $H(\mathcal{S}_l)=\sum_{i=1}^{l} h_i$ is so large that no other symbol can be added to our current set $\mathcal{S}_l$, without violating the entropy condition.
Hence,
\begin{equation}
H(\mathcal{S}_l)\ge M-\alpha. \label{eq:HSl_bound}
\end{equation}
Assume otherwise, {\it i.e.}, for example $H(\mathcal{S}_l)\le M-\alpha-\epsilon$, for any $\epsilon>0$.
Then, any coded symbol not in $\mathcal{S}_l$ can be added in $\mathcal{S}_l$ so that the aggregate entropy is at most $M-\epsilon$: the new symbol can only increase the joint entropy by at most $\alpha$. 
Hence, $H(\mathcal{S}_l)$ has to be at least $M-\alpha$.

Now we will lower bound $l$, the number of iterations to reach an entropy of at least $M-\alpha$.
Since the algorithm is assumed to exit at line 9, as mentioned before, at every iteration $i$ we have $s_i\le r+1$ and $h_i\le (s_i-1) \alpha$, for all $1\le i\le l$.
The minimum number of iterations occurs, when at each iteration the algorithm picks sets such that the entropy increase $h_i$  is equal to its upper bound $r\cdot \alpha$.
Therefore,
\begin{align}
l \ge \left\lceil \frac{H(S_l)}{r \cdot \alpha} \right\rceil\ge \left\lceil \frac{M-\alpha}{r \cdot \alpha} \right\rceil.\label{eq:l_bound}
\end{align}
Using \eqref{eq:HSl_bound} and \eqref{eq:l_bound}, we can rewrite \eqref{eq:Slbound1} as
\begin{align}
\alpha |\mathcal{S}_l|& \ge H(\mathcal{S}_l)+l\cdot \alpha\ge M-\alpha+ \alpha\cdot \left\lceil \frac{M-\alpha}{r \cdot \alpha} \right\rceil \nonumber\\
\Rightarrow |\mathcal{S}_l|& \ge \left\lceil \frac{M-\alpha+ \alpha\cdot \left\lceil \frac{M-\alpha}{r \cdot \alpha} \right\rceil}{\alpha} \right\rceil = 
\left\lceil \frac{M}{\alpha}-1+ \left\lceil \frac{M-\alpha}{r \cdot \alpha} \right\rceil \right\rceil\nonumber\\
&\overset{(i)}{=} \left\lceil \frac{M}{\alpha}\right\rceil-1+ \left\lceil \frac{M-\alpha}{r \cdot \alpha} \right\rceil = \left\lceil \frac{M}{\alpha}\right\rceil-1+ \left\lceil \frac{M}{r \cdot \alpha}-\frac{1}{r} \right\rceil\nonumber\\
&\ge \left\lceil \frac{M}{\alpha}\right\rceil-1+ \left\lceil \frac{M}{r \cdot \alpha} \right\rceil-1\nonumber\\
\Rightarrow d&\le n-|S_l|\le n -  \left\lceil \frac{M}{\alpha}\right\rceil - \left\lceil \frac{M}{r \cdot \alpha} \right\rceil+2, \label{eq:d_bound_1}
\end{align}
where the equality in $(i)$ comes from the fact that $\lceil x+n\rceil=\lceil x \rceil+n$, for any real number $x$ and any integer $n$ \cite{knuth1989concrete}; in our case $x = \frac{M}{\alpha}$ and $n = -1+ \left\lceil \frac{M-\alpha}{r \cdot \alpha}\right\rceil$.\\

\noindent
\underline{Case {\it ii) The algorithm exits at line $12$:}}\\
In this case, the algorithm runs for $l$ iterations; during the $l-1$ first iterations, the algorithm augments $\mathcal{S}_{i-1}$ at step $1\le i\le l-1$, by entire $(r+1)$-groups. 
During the last step $i=l$, the algorithm augments $\mathcal{S}_{l-1}$, by a {\it partial} subset of $Y_{\Gamma(j)}$, for some coded symbol $Y_j$ not in $\mathcal{S}_{l-1}$.
From the above, we get the following bounds
\begin{equation}
s_i \le r+1, \text{ for all }1\le i\le l-1, \text{ and }s_l \le r,\label{eq:si_bound_2}
\end{equation}
and 
\begin{equation}
h_i \le (s_i-1)\alpha, \text{ for all }1\le i\le l-1, \text{ and }h_l \le s_l\alpha\label{eq:hi_bound_2}.
\end{equation}
The right most part of the above bounds comes from the fact that, during the last iteration, at most $r$ coded symbols can be added to the set $\mathcal{S}_{l-1}$.
Moreover, in contrast to all other iterations, during the last iteration it is possible to augment $\mathcal{S}_{l-1}$ by $s_l$ new coded symbols, all being independent to each other and any other symbols in $\mathcal{S}_{l-1}$; that is $h_l$ can be as large as $s_l\alpha$.

We will again bound the size of $\mathcal{S}_l$, the maximal set of coded symbols that has entropy less than $M$.
We use \eqref{eq:hi_bound_2} and sum over all $s_i$'s to obtain our bound on the size of $\mathcal{S}_l$:
\begin{align}
&\sum_{i=1}^l h_i\overset{\eqref{eq:hi_bound_2}}{\le}\left(\sum_{i=1}^{l-1}s_i\alpha \right)-(l-1)\cdot \alpha+s_l\cdot \alpha\nonumber\\
\Rightarrow &\alpha\cdot \sum_{i=1}^{l}s_i\ge \sum_{i=1}^l h_i+(l-1)\cdot \alpha\nonumber
\Rightarrow \alpha\cdot |\mathcal{S}_l|\ge \sum_{i=1}^l h_i+(l-1)\cdot \alpha \nonumber \\
\Rightarrow &\alpha\cdot |\mathcal{S}_l|\ge H(\mathcal{S}_l)+(l-1)\cdot \alpha. \label{eq:Sl_bound_2}
\end{align}

We now need to bound again the two quantities that control the bound in \eqref{eq:Sl_bound_2}: $H(\mathcal{S}_l)$ and $l$.
We can use the same bound as used in Case {\it i)} for $H(\mathcal{S}_l)=\sum_{i=1}^{l} h_i$, {\it i.e.}, the entropy of the constructed set $H(\mathcal{S}_l)$ has to be large enough, so that another iteration cannot be carried on:
\begin{equation}
H(\mathcal{S}_l)\ge M-\alpha. \label{eq:HSl_bound_2}
\end{equation}
Again, let us assume otherwise: $H(\mathcal{S}_l)\le M-\alpha-\epsilon$, for any $\epsilon>0$.
Then, any coded symbol not in $\mathcal{S}_l$ can be added in $\mathcal{S}_l$ so that the aggregate entropy is at most $M-\epsilon$.
Hence, $H(\mathcal{S}_l)\ge M-\alpha$.

Now we will bound the number of iterations $l$.
The last added subset of symbols $\mathcal{T}$ that augments $\mathcal{S}_{l-1}$ has cardinality less than, or equal to $r$.
{\color{black}Otherwise, if $\mathcal{T}$} was an entire $(r+1)$-group, then the statement in line $6$ of the algorithm would have been FALSE.
This means that adding $(r+1)$-groups for all iterations, including the $l$-th one, can have as much entropy as $r\cdot l\cdot \alpha$, which has to be {\it at least} as much as $M$, or else we would not have been under {\it Case ii)} of the algorithm.
Hence,
\begin{equation}
r\cdot l\cdot \alpha \ge M \Rightarrow l\ge \left\lceil \frac{M}{r\cdot \alpha}\right\rceil. \label{eq:l_bound_2}
\end{equation}

We can now use the bounds in \eqref{eq:HSl_bound_2} and \eqref{eq:l_bound_2} to rewrite \eqref{eq:Sl_bound_2} as
\begin{align}
&\alpha |\mathcal{S}_l| \ge H(\mathcal{S}_l)+(l-1)\cdot \alpha\ge M-\alpha+ \alpha\cdot \left(\left\lceil \frac{M}{r \cdot \alpha} \right\rceil-1\right)\nonumber\\
\Rightarrow &|\mathcal{S}_l| \ge \left\lceil \frac{M-2\alpha+ \alpha\cdot \left\lceil \frac{M}{r \cdot \alpha} \right\rceil}{\alpha} \right\rceil =
\left\lceil \frac{M}{\alpha}-2+ \left\lceil \frac{M}{r \cdot \alpha} \right\rceil \right\rceil\overset{(i)}{=} \left\lceil \frac{M}{\alpha}\right\rceil-2+ \left\lceil \frac{M}{r \cdot \alpha} \right\rceil \nonumber\\
\Rightarrow &d\le n-|S_l|\le n -  \left\lceil \frac{M}{\alpha}\right\rceil - \left\lceil \frac{M}{r \cdot \alpha} \right\rceil+2,\label{eq:d_bound_2}
\end{align}
{\color{black}where $(i)$ comes from the fact that $\lceil x+n\rceil=\lceil x \rceil+n$, for any real number $x$ and any integer $n$ \cite{knuth1989concrete}.}
The bounds of \eqref{eq:d_bound_1} and \eqref{eq:d_bound_2} establish our theorem.
\end{proof}


\begin{rem}
Observe that when $(r+1)|n$, we can partition the set of $n$ coded symbols in $\frac{n}{r+1}$ non-overlapping $(r+1)$-groups.
The algorithmic proof that we used, relied on the fact that collecting $(r+1)$-groups, is one of the ways to achieve the lower bound on the size of $\mathcal{S}$.
This observation will lead us to an achievability proof for the case of $(r+1)|n$.
We will see that pair-wise disjoint repair groups is one of the (possibly many) arrangements of repair groups that leads to optimal constructions.
\end{rem}
\begin{rem}
In the above bound, if we set $\alpha = 1$ and $M=k$, we get the same bound as~\cite{gopalan2011locality}.
The $\alpha = 1$ case is equivalent to considering scalar codes.
As it turns out, for the scalar case, linear codes are sufficient for this bound and nonlinearity in the encoding process does not come with any improvements in code distance.
\end{rem}



In the following section, we show that the above distance bound is tight when \textsc{$(r+1)|n$}.
{\color{black}This does not rule out that the bound is tight under more general assumptions, however, this is left as an open question.
For linear codes, \cite{gopalan2011locality} shows that codes with information-symbol locality can be constructed under different assumptions  (for example when $r|k$ and $2<d<r+3$), using a structure theorem ({\it e.g.,} see Theorem 15 in \cite{gopalan2011locality}).
At the same time, it is impossible to construct optimal and linear LRCs (with all-symbol locality) when $2<d<r+3$ and $r|k$ ({\it e.g.,} see Corollary 10 in \cite{gopalan2011locality}). 
It would be interesting to explore the use of the tools presented in \cite{gopalan2011locality}, to provide further impossibility, or achievability results that extend the $(r+1)|n$ case that we study.
}

\section{Achievability of the Bound: Random LRCs}
\label{sec:lrc_achievability}

In this section, we {\color{black}establish} the following existence result:
\begin{theo}
Let $(r+1)|n$ and {\color{black}$r\le n-d$}. 
Then, there exist $(n,r,d,M,\alpha)$-LRCs with minimum distance 
$$d=n-\left\lceil\frac{M}{\alpha}\right\rceil-\left\lceil\frac{M}{ra}\right\rceil+2,$$
over a sufficiently large finite field.
\label{theo:achieve}
\end{theo}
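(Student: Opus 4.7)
The plan is to reformulate the existence question as a multicast capacity problem on a finite, locality-aware information flow graph, and then invoke the random linear network coding theorem of Ho et al.\ \cite{ho2006random} to conclude that a code meeting the bound exists over a sufficiently large finite field. This parallels \cite{dimakis2010network}, except that the graph here is finite (no repair-induced layers are needed) and explicitly encodes the repair-group structure afforded by the assumption $(r+1)\mid n$.

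Concretely, I would partition the $n$ coded positions into $\tfrac{n}{r+1}$ disjoint $(r+1)$-groups $\Gamma_1,\ldots,\Gamma_{n/(r+1)}$---the remark following Theorem~\ref{theo:distance} motivates this choice. The flow graph has a source $s$ of rate $M$; for each group $j$ an intermediate ``group node'' $g_j$ with an edge $s\to g_j$ of capacity $r\alpha$, encoding the locality constraint $H(Y_{\Gamma_j})\le r\alpha$; and for each position $i\in\Gamma_j$ a split vertex $u_i^{\mathrm{in}}\to u_i^{\mathrm{out}}$ with internal capacity $\alpha$ (encoding per-node storage), fed from $g_j$ by an infinite-capacity edge. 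A data collector $DC_T$ is attached to $\{u_i^{\mathrm{out}}:i\in T\}$ by infinite edges, for every $T\subseteq[n]$ with $|T|=n-d+1$.

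Next, I would verify that the $s\to DC_T$ min-cut is at least $M$ for every $T$. Inside group $j$ the cheapest local cut is $\alpha\min(r,|T\cap\Gamma_j|)$---either sever the shared $s\to g_j$ edge (cost $r\alpha$) or sever the $|T\cap\Gamma_j|$ individual storage edges---so the total min-cut equals $\alpha\sum_j\min(r,|T\cap\Gamma_j|)$. Minimizing this subject to $\sum_j|T\cap\Gamma_j|=n-d+1$ and $|T\cap\Gamma_j|\le r+1$ reduces to maximizing the number of completely filled groups, since each full group ``saves'' exactly $\alpha$ relative to distributing those symbols across partial groups; the minimum is $\alpha\bigl(n-d+1-\lfloor(n-d+1)/(r+1)\rfloor\bigr)$. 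Plugging in $n-d+1=\lceil M/\alpha\rceil+\lceil M/(r\alpha)\rceil-1$ and using $\lceil M/(r\alpha)\rceil\ge M/(r\alpha)>(\lceil M/\alpha\rceil-1)/r$ shows this is at least $\alpha\lceil M/\alpha\rceil\ge M$. The hypothesis $r\le n-d$ enters here to guarantee $n-d+1\ge r+1$, so that the ``full-group'' configuration is non-degenerate.

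Finally, since the min-cut to each of the $\binom{n}{n-d+1}$ receivers is at least $M$ and the source rate is $M$, \cite{ho2006random} guarantees that random linear coefficients over a sufficiently large finite field yield, with positive probability, a code simultaneously achieving capacity at every $DC_T$. This produces encoding and per-group decoding functions realizing an $(n,r,d,M,\alpha)$-LRC. I expect the step dominating the work to be the min-cut optimization: arguing cleanly that spreading coded positions across many partially filled groups is always suboptimal, and then pushing the ceiling/floor arithmetic through to confirm the bound; the rest is a direct invocation of the network coding machinery.
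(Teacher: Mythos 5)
Your construction, min-cut analysis, and the invocation of random linear network coding all match the paper's approach quite closely, and the ceiling/floor arithmetic you sketch does push through to show the min-cut is at least $\alpha\lceil M/\alpha\rceil\ge M$. The gap is in the final sentence: you assert that satisfying the global decoding requirements at every $DC_T$ ``produces encoding and per-group decoding functions realizing an $(n,r,d,M,\alpha)$-LRC,'' but the Ho et al.\ theorem only guarantees the former. Knowing that the $r+1$ symbols in a group are all functions of the $r\alpha$-dimensional message leaving $\Gamma_j$ does \emph{not} by itself imply that any $r$ of them determine that message (and hence the $(r+1)$-st symbol). For each group $j$ and each $r$-subset, you must show that the stacked $r\alpha\times r\alpha$ matrix of random local-encoding coefficients from $\Gamma_j$ through the chosen $r$ split nodes is full rank. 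The paper makes this the centerpiece of its proof: it introduces fictitious ``local data collectors'' attached to every $r$-subset within each group, translates their decoding requirements into full-rank conditions $\mathcal{FR}_j^i$, and then applies a union bound over all $T$ global decoding events and all $n$ local full-rank events to show RLNC satisfies everything simultaneously over a sufficiently large field. Your proposal needs exactly this extra step; it is not a restatement of the multicast theorem but an additional probabilistic argument layered on top of it. (You also mis-identify the ``step dominating the work'': the min-cut optimization is routine, whereas establishing locality is where the proof's real content lies, as the paper flags explicitly after its Lemma~1.)
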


We establish the theorem through capacity achieving {\color{black}schemes} for a specific communication network; such network will be defined through a directed and acyclic flow-graph.
In the first subsection, we introduce the communication model for our network.
In the second subsection, we show that a capacity achieving scheme for the aforementioned network maps to specific codes with specified parameters.
In the third and fourth subsections, 
we construct randomized capacity achieving schemes and then map them to $(n,r,d,M,\alpha)$-LRCs.
The distance $d$ of the aforementioned codes will be equal to the upper bound of Theorem 1.

\subsection{The flow-graph network, multicast sessions, and its multicast capacity}

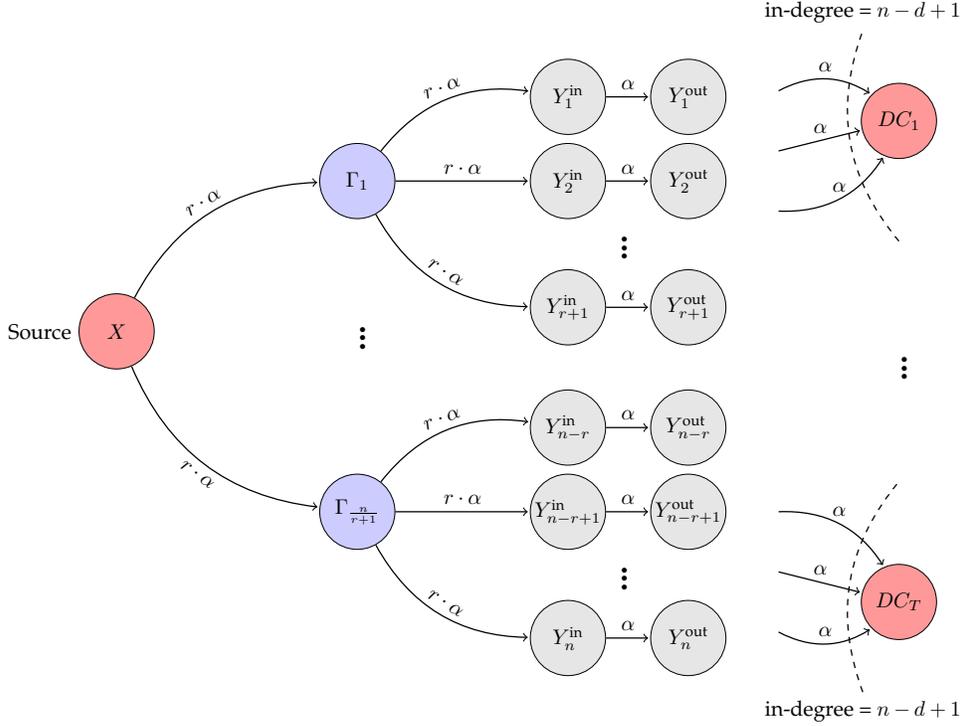
\begin{figure*}[t]
\hrulefill
\begin{center}
\scalebox{0.8}{
\begin{tikzpicture}
[every place/.style={minimum size=12.5mm, minimum width=12.5mm}]
\node[ minimum size=55mm, place,draw=black,  fill=red!40, label = left:{Source}] (source) at (-3,-2.5) [place] {$X$} ;
\node[ minimum width=15mm, place,draw=black,  fill=blue!20] (gamma1in) at (1,0) [place] {$\Gamma_1$} ;
\node[ minimum width=15mm, place,draw=black,  fill=black!10] (Y1in) at (3.5+1,1.3+0.1) [place] {$Y_1^{\text{in}}$} ;
\node[ minimum width=15mm, place,draw=black,  fill=black!10] (Y1out) at (5.5+1,1.3+0.1) [place] {$Y_1^{\text{out}}$} ;

\node[ minimum width=15mm, place,draw=black,  fill=black!10] (Y2in) at (3.5+1,0) [place] {$Y_2^{\text{in}}$} ;
\node[ minimum width=15mm, place,draw=black,  fill=black!10] (Y2out) at (5.5+1,0) [place] {$Y_2^{\text{out}}$} ;

\node[ minimum width=15mm, place,draw=black,  fill=black!10] (Yrin) at (3.5+1,-2-0.1) [place] {$Y_{r+1}^{\text{in}}$} ;
\node[ minimum width=15mm, place,draw=black,  fill=black!10] (Yrout) at (5.5+1,-2-0.1) [place] {$Y_{r+1}^{\text{out}}$} ;
\node[text width=4cm] at (7.35,-1) {\huge $\vdots$};
\path (Y1in) edge[semithick, post] node [above, midway]  {$\alpha$} (Y1out);
\path (Y2in) edge[semithick, post] node [above, midway]  {$\alpha$} (Y2out);
\path (Yrin) edge[semithick, post] node [above, midway]  {$\alpha$} (Yrout);
\path (gamma1in) edge[semithick, post, bend left] node [midway, sloped, above] {$r\cdot\alpha$} (Y1in);
\path (gamma1in) edge[semithick, post] node [midway, above] {$r\cdot\alpha$} (Y2in);
\path (gamma1in) edge[semithick, post, bend right] node [midway, sloped, above] {$r\cdot\alpha$} (Yrin);
\node[text width=4cm] at (3,-2.5) {\huge $\vdots$};
\node[ minimum width=15mm, place,draw=black,  fill=blue!20] (gammarin) at (1,0-5.5) [place] {$\Gamma_{\frac{n}{r+1}}$} ;
\node[ minimum width=15mm, place,draw=black,  fill=black!10] (Yr1in) at (3.5+1,1.3+0.1-5.5) [place] {$Y_{n-r}^{\text{in}}$} ;
\node[ minimum width=15mm, place,draw=black,  fill=black!10] (Yr1out) at (5.5+1,1.3+0.1-5.5) [place] {$Y_{n-r}^{\text{out}}$} ;

\node[ minimum width=15mm, place,draw=black,  fill=black!10] (Yr2in) at (3.5+1,0-5.5) [place] {$Y_{n-r+1}^{\text{in}}$} ;
\node[ minimum width=15mm, place,draw=black,  fill=black!10] (Yr2out) at (5.5+1,0-5.5) [place] {$Y_{n-r+1}^{\text{out}}$} ;

\node[ minimum width=15mm, place,draw=black,  fill=black!10] (Yrrin) at (3.5+1,-2-0.1-5.5) [place] {$Y_n^{\text{in}}$} ;
\node[ minimum width=15mm, place,draw=black,  fill=black!10] (Yrrout) at (5.5+1,-2-0.1-5.5) [place] {$Y_n^{\text{out}}$} ;

\node[text width=4cm] at (7.35,-6.5) {\huge $\vdots$};
\node[ minimum width=15mm, place,draw=black,  fill=red!40] (DC1) at (9+1,1) [place] {$DC_1$} ;
\node[ minimum width=15mm, place,draw=black,  fill=red!40] (DC4) at (9+1,-7) [place] {$DC_T$} ;
\path  (7+1,1.5) edge[semithick, post, bend left] node [above, midway]  {$\alpha$} (DC1);
\path  (7+1,0.5) edge[semithick, post] node [above, midway]  {$\alpha$} (DC1);
\path  (7+1,-0.5) edge[semithick, post, bend right] node [above, midway]  {$\alpha$} (DC1);
\path  (9+1,-1) edge[semithick, dashed, bend left] node [above, at end]  {in-degree = $n-d+1$ } (8.4+1, 2.5	);

\node[text width=4cm] at (12,-3) {\huge $\vdots$};

\path  (7+1,1.5-7) edge[semithick, post, bend left] node [above, midway]  {$\alpha$} (DC4);
\path  (7+1,0.5-7) edge[semithick, post] node [above, midway]  {$\alpha$} (DC4);
\path  (7+1,-0.5-7) edge[semithick, post, bend right] node [above, midway]  {$\alpha$} (DC4);
\path  (8.4+1,-1-7-0.5) edge[semithick, dashed, bend left] node [below, at start]  {in-degree = $n-d+1$ } (9+1, 2.5-7-0.5);

\path (Yr1in) edge[semithick, post] node [above, midway]  {$\alpha$} (Yr1out);
\path (Yr2in) edge[semithick, post] node [above, midway]  {$\alpha$} (Yr2out);
\path (Yrrin) edge[semithick, post] node [above, midway]  {$\alpha$} (Yrrout);
\path (gammarin) edge[semithick, post, bend left] node [midway, sloped, above] {$r\cdot\alpha$} (Yr1in);
\path (gammarin) edge[semithick, post] node [midway, above] {$r\cdot\alpha$} (Yr2in);
\path (gammarin) edge[semithick, post, bend right] node [midway, sloped, above] {$r\cdot\alpha$} (Yrrin);
\path (source) edge[semithick, post, bend left] node [midway, sloped, above] {$r\cdot \alpha$} (gamma1in);
\path (source) edge[semithick, post, bend right] node [midway, sloped, below] {$r\cdot \alpha$} (gammarin);
\end{tikzpicture}
}
\end{center}

\caption{ 
The directed acyclic information flow-graph $\mathcal{G}(n,r,d,\alpha)$. 
The left-most vertex is the source node $X$. 
{\color{black}The $\frac{n}{r+1}$  vertices $\Gamma_i$ correspond to nodes that limit the in-flow to specific groups of intermediate nodes.
The right-most $T={n \choose n-d+1}$  vertices $\text{DC}_i$ are the destination nodes (referred to as the data collectors) of the network.}
Each DC is connected to a different  $(n-d+1)$-tuple of $Y_{i}^{\text{out}}$ nodes.
} 
\label{fig:flow_graph}
\hrulefill
\end{figure*}

Our achievability proof relies on using random linear network coding (RLNC) on a directed acyclic flow-graph, borrowing ideas 
from~\cite{ho2006random,dimakis2010network}. Fig.~\ref{fig:flow_graph} shows the directed and acyclic flow-graph that we use, which is formally defined subsequently.

{\color{black} At a conceptual level our proof analyzes a nested multicast problem that consists of two parts. 
We show that when RLNC is  employed on the flow-graph in Fig.~\ref{fig:flow_graph}, {\it i)} it multicasts the source transmitted by node $X$ to all data collectors (global decoding requirements) and {\it ii)} it simultaneously allows each collection of $r$ nodes $Y^{\text{out}}_j$, originating from the same $\Gamma_i$ node, to reconstruct whatever $\Gamma_i$ transmits (local decoding requirements).
The first 
part of the proof is a standard application of RLNC~\cite{ho2006random}.
For the second part, our proof relies on a further subtle technicality that we discuss below.

General nested multicasting problems can be very challenging, but our problem has a very special structure:
there are no edges between $Y_j^{\text{in}}$, $Y_{j'}^{\text{out}}$ nodes that originate from different $\Gamma_i$ vertices.
This means that there is no ``algebraic interference" between the linear combinations of packets transmitted/received by these nodes.
We use this fact to show that if the $T$ data collectors $\text{DC}_1,\ldots, \text{DC}_T$ receive linearly independent equations of the source information, then 
 each group of $r$ nodes $Y^{\text{out}}_j$ that originate from the same $\Gamma_i$ node, receive linearly independent equations of the packets that node $\Gamma_i$ transmits.
This allows us to essentially use the technique of Ho \textit{et al.}~\cite{ho2006random} to establish that both the global and local decoding requirements are simultaneously satisfied. }

We now proceed with the detailed description of our proof. Let $\mathcal{G}(n,r,d,\alpha)$, be a directed acyclic graph that represents a communication network with $1$ source node and $T$ destination nodes and has vertex set
\begin{align}
\mathcal{V} = 
&\left\{X,
\Gamma_1,
\ldots, 
\Gamma_{\frac{n}{r+1}},
Y_1^{\text{in}},Y_1^{\text{out}},
\ldots,
Y_n^{\text{in}},Y_n^{\text{out}},
\text{DC}_1,\ldots, \text{DC}_T
\right\},\nonumber
\end{align}
where $X$ denotes the {\it source node}, $\text{DC}_1,\text{DC}_2,\ldots, \text{DC}_T$ are the $T={n \choose n-d+1}$ {\it destination nodes}, referred to as the Data Collectors (DCs), and the remaining nodes are the {\it intermediate nodes}.
Each vertex in $\mathcal{V}$ is assumed to be a receive and/or transmit {\it node}.
It will become clear what that means after the following definitions, which are introduced to make our proof self-contained, while requiring minimal familiarity with network coding theory.
For further details on our network model please refer to~\cite{yeung2005network}.

\begin{defin}[edge capacity/network use/local encoding function/source message]
A directed edge between two vertices $v$ and $u$ denotes a communication link between two nodes, over which bits are transmitted.
All links are assumed to introduce no error.
The directed edge capacity $c(v,u)$, between vertices $v,u$, denotes the maximum number of bits that can be communicated from node $v$ to node $u$ {\color{black}during} a single network use. 
A single network use denotes the sequence of single transmissions over every directed edge.
A message ${\bf m}_{(v,u)}$ is a collection of $c(v,u)$ bits that are transmitted from node $v$ to node $u$, during a single network use.
A message ${\bf m}_{(v,u)}$ can be considered as a collection of $c(v,u)$ binary uniform variables\footnote{Although we assume that the messages transmitted over the links are sets of binary variables, this can be generalized to $2^\tau$-ary variables ({\it i.e.}, each variable will now be an element of a finite field of order $q=2^\tau$).
This is possible, if we consider $\tau$ consecutive transmit sessions per link, during a network use.
We can then consider an equivalent network where the alphabet is $2^\tau$-ary.
As a consequence, the entropies used under this setting should be $2^\tau$-ary, and all the following results carry on to that case.}
with joint binary entropy equal to 
$H({\bf m}_{(v,u)}) = c(v,u).$
Let $\mathcal{I}_v$ denote the (in-coming) vertices incident to vertex $v$.
Then, the message ${\bf m}_{(v,u)}$ that is transmitted on a link $e(v,u)$ during a single network use, is the output of a {\it local encoding function}
$$f_{v,u}\left(\left\{{\bf m}_{(v',v)};v'\in\mathcal{I}_v\right\}\right):\mathbb{F}_2^{\sum_{v'\in\mathcal{I}_v}c(v',v)}\rightarrow \mathbb{F}_2^{c(v,u)},$$ 
that takes as input the set of messages $\left\{{\bf m}_{(v',v)};v'\in\mathcal{I}_v\right\}$ received by node $v$ via the incident nodes in $\mathcal{I}_v$.
The source node of the network holds a source bit sequence ${\bf x}$ of size $H({\bf x})$ bits and wishes to transmit it to the $T$ destination nodes. 
\end{defin}

We are now ready to define the directed weighted edge (link) set, that is determined by the following link capacities
\begin{equation}
c(v,u)=\left\{
\begin{array}{cl}
r\cdot \alpha,& (v,u) = (X,\Gamma_j), \forall j\in\left[\frac{n}{r+1}\right],\\
r\cdot \alpha ,& (v,u) = (\Gamma_j, Y^{\text{in}}_l), \forall j\in\left[\frac{n}{r+1}\right] \text{ and } l\in \{(j-1)(r+1)+1,\ldots,j(r+1)\},\\
\alpha, & (v,u) = \left(Y_j^{\text{in}},Y_j^{\text{out}} \right),\forall j\in[n],\\
\alpha ,& (v,u) = (Y_j^{\text{out}}, \text{DC}_t), \forall j\in \mathcal{F}_t \text{ and } t\in[T],\\
0, &\text{otherwise},
\end{array}
\right.\nonumber
\end{equation}
where the $T$ sets $\mathcal{F}_1,\ldots, \mathcal{F}_T$ are all  $T={n \choose n-d+1}$ possible subsets of $n-d+1$ integers in $[n]$.
Observe that the in-degree of any $\text{DC}_t$ node (the number of incident directed edges arriving at {\color{black}that} node) is $n-d+1$.

The $\mathcal{G}(n,r,d,\alpha)$ network comes together with $T$ {\it decoding requirements}:
each destination node $\text{DC}_t$, for $t\in[T]$, is required, after a network use, to be able to reproduce
from its received bits the source sequence ${\bf x}$.
{\color{black}The decoding requirements can be stated as the following conditional entropy requirements:}
$$\mathcal{D}_t:\;\;H\left({\bf x}\left|\left\{{\bf m}_{(Y_j^{\text{out}},DC_t)}: j\in\mathcal{F}_t\right\}\right.\right)=0, \;\forall t\in[T].$$
We are now ready to provide the main definition needed for our proof.
\begin{defin}[multicast capacity and capacity achieving schemes]
The directed graph $\mathcal{G}(n,r,d,\alpha)$
 and a set of decoding requirements $\mathcal{D}_1,\ldots, \mathcal{D}_T$,  specify a {\it multicast connection problem}.
Let $C$ be the maximum number of source bits such that all decoding requirements are satisfied, after a single network use.
Then, $C$ is defined as the multicast capacity of $\mathcal{G}(n,r,d,\alpha)$.
A capacity achieving scheme, is a collection of local encoding functions such that all $T$ decoding requirements are satisfied for $H({\bf x})=C$.
\end{defin}

In the following subsection, we derive a connection between a capacity achieving scheme on $\mathcal{G}(n,r,d,\alpha)$ and the existence of a code of well specified parameters.
Then, we calculate the capacity of $\mathcal{G}(n,r,d,\alpha)$.

\subsection{Connecting capacity achieving schemes to codes}

{\color{black}The following lemma  connects capacity achieving schemes on $\mathcal{G}(n,r,d,\alpha)$, to  the existence of codes.}
\begin{lem}
The set of $n$ local encoding functions $f_{(Y^{\text{in}}_i,Y^{\text{out}}_i)}$, $i\in[n]$, of a capacity achieving scheme on $\mathcal{G}(n,r,d,\alpha)$, can be mapped to a code of length $n$, that encodes a file of size $C$ in $n$ coded symbols, each of size $\alpha$ bits.
This code has distance $d$.
\end{lem}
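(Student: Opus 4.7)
The plan is to turn the local encoding functions of the given scheme into an explicit codebook and then use the decoding requirements $\mathcal{D}_1,\ldots,\mathcal{D}_T$ as witnesses of distance. I would identify the $i$-th coded symbol $Y_i$ with the message ${\bf m}_{(Y^{\text{in}}_i,Y^{\text{out}}_i)}$ produced by $f_{(Y^{\text{in}}_i,Y^{\text{out}}_i)}$. Since the edge $(Y^{\text{in}}_i,Y^{\text{out}}_i)$ has capacity $\alpha$, we have $H(Y_i)\le\alpha$, matching the per-symbol storage cost. Composing the chain of local encoding functions along the unique path from the source $X$ to $Y^{\text{in}}_i$ (through the appropriate $\Gamma_j$) yields a well-defined deterministic encoding ${\bf x}\mapsto(Y_1,\ldots,Y_n)$, where ${\bf x}$ is the source sequence with $H({\bf x})=C$. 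This immediately gives a code of length $n$, file size $C$, and symbol size $\alpha$.

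For the distance, I would argue that any $n-d+1$ coded symbols suffice to reconstruct ${\bf x}$. Fix any $\mathcal{F}\subseteq[n]$ with $|\mathcal{F}|=n-d+1$; by construction of $\mathcal{G}(n,r,d,\alpha)$, the family $\mathcal{F}_1,\ldots,\mathcal{F}_T$ enumerates all such subsets, so there exists $t$ with $\mathcal{F}_t=\mathcal{F}$ and a data collector $\text{DC}_t$ whose only incoming edges originate at $\{Y^{\text{out}}_j:j\in\mathcal{F}\}$. Because the scheme is capacity achieving, the decoding requirement $\mathcal{D}_t$ holds, so $H({\bf x}\,|\,\{{\bf m}_{(Y^{\text{out}}_j,\text{DC}_t)}:j\in\mathcal{F}\})=0$. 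Each message ${\bf m}_{(Y^{\text{out}}_j,\text{DC}_t)}$ is a function of $Y^{\text{out}}_j$'s inputs, which reduce to $Y_j$ alone (its unique incoming edge). Hence ${\bf x}$ is a deterministic function of $\{Y_j:j\in\mathcal{F}\}$, so $H(\{Y_j:j\in\mathcal{F}\})\ge C=M$. Since this holds for every $(n-d+1)$-subset, the entropy-based minimum-distance definition of Section~\ref{sec:preliminaries} yields that the code's minimum distance is at least $d$.

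The main obstacle I foresee is primarily semantic and concerns two points. First, the lemma asserts distance exactly $d$, whereas the argument above only yields distance at least $d$; the matching upper bound comes from Theorem~\ref{theo:distance} once the multicast capacity is shown (in the subsequent subsection) to satisfy $n-\lceil C/\alpha\rceil-\lceil C/(r\alpha)\rceil+2=d$, pinning the code's distance at $d$ exactly. Second, ${\bf m}_{(Y^{\text{out}}_j,\text{DC}_t)}$ need not equal $Y_j$; it is merely a function of $Y_j$. This is harmless for the direction we need, because decoding from functions of the $Y_j$'s trivially implies decoding from the $Y_j$'s themselves, and it is precisely why phrasing distance through entropies (rather than via explicit ``node $\text{DC}_t$ reads symbol $Y_j$'' statements) keeps the argument clean and avoids having to insist that the forwarding functions on edges $(Y^{\text{out}}_j,\text{DC}_t)$ be the identity.
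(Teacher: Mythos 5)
Your proposal takes essentially the same approach as the paper: set $Y_i$ to the output of the global encoding function on edge $(Y^{\text{in}}_i, Y^{\text{out}}_i)$, then use the decoding requirements $\mathcal{D}_t$ (together with the fact that a DC's incoming messages are functions of the corresponding $Y_j$'s) to conclude that any $n-d+1$ coded symbols determine ${\bf x}$. Your remark that this argument strictly yields distance at least $d$, with the matching upper bound coming from Theorem~\ref{theo:distance} once the capacity is computed, is a correct and slightly more careful reading of what the lemma actually establishes than the paper's own phrasing.
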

\begin{proof}
Observe that any local encoding function $f_{v,u}$ can be re-written as some {\it global encoding function} of the $C$ source bits in ${\bf x}$ \cite{yeung2005network}.
Let $f_i({\bf x}):\mathbb{F}_2^M\rightarrow \mathbb{F}_2^\alpha$ be the global function representation for $f_{(Y^{\text{in}}_i,Y^{\text{out}}_i)}$.
If the $n$ local encoding functions $f_{(Y^{\text{in}}_1,Y^{\text{out}}_1)}$,\ldots,$f_{(Y^{\text{in}}_n,Y^{\text{out}}_n)}$ are part of a capacity achieving scheme, then due to the decoding requirements being satisfied, we have
\begin{equation}
H\left(\left.{\bf x}\right|\left\{{\bf m}_{(Y_j^{\text{out}},\text{DC}_t)}: j\in\mathcal{F}_t\right\}\right)=0 \Rightarrow H\left({\bf x} \left| \left\{f_j({\bf x}): j\in\mathcal{F}_t\right\}\right.\right)=0
\end{equation}
since ${\bf m}_{(Y_j^{\text{out}},\text{DC}_t)}$ is a function of $f_j({\bf x})$, $t\in[T]$.
Now, let 
$Y_ i = f_i({\bf x})$
and observe that each $Y_i$ is a collection of $\alpha$ bits.
Then, the $T$ decoding requirements $H\left({\bf x} \left| \left\{f_i({\bf x}): i\in\mathcal{F}_t\right\}\right.\right)=0$, for $t\in[T]$, are equivalent to the following statement: 
``any collection of $n-d+1$ symbols $Y_i$, with $i\in[n]$, are sufficient to reconstruct ${\bf x}$".
This implies that 
\begin{equation}
[Y_1,\ldots Y_n ] = \left[f_1({\bf x})\ldots f_n({\bf x})\right],
\end{equation}
defines a code of length $n$, that encodes a files of size $C$, each coded symbol is of size $\alpha$, and any $n-d+1$ coded symbols can reconstruct ${\bf x}$, {\it i.e.},  the code has distance $d$.
\end{proof}

\begin{rem}
Observe that the above result does not guarantee that the code defined by $ f_{(Y^{\text{in}}_i,Y^{\text{out}}_i)}$ has locality $r$.
Locality comes as an artifact of the graph structure and the random capacity achieving scheme that we will use.
\end{rem}

\subsection{Computing the source-destination cuts and achieving the capacity}
In this subsection, we calculate the capacity of $\mathcal{G}(n,r,d,\alpha)$, and show how to achieve it.
Let us first define the minimum cuts of the above network.

\begin{defin}[minimum cut]
A directed cut between nodes $v$ and $u$, referred to as $\text{Cut}(v,u)\subseteq E$, is a subset of directed edges, such that if these edges are removed, then there does not exist a directed path between nodes $v$ and $u$;
$|\text{Cut}(v,u)|$ is the sum of all edge capacities in the set $\text{Cut}(v,u)$, referred to as the capacity of $\text{Cut}(v,u)$.
A minimum cut $\text{MinCut}(v,u)$ is the cut with the minimum aggregate edge capacity among all cuts between $v$ and $u$.
\end{defin}
It is a well-known fact for communication networks, that $|\text{MinCut}(v,u)|$ is an upper bound on the number of bits that one can communicate from node $v$ to node $u$ \cite{yeung2005network}.
Consequently, the cut with the minimum capacity, among the cuts of all source-destination pairs, is an {\it upper bound} on the multicast capacity of a network.
Most importantly, we know that this bound is achievable for multicast session networks.
We state as Theorem 3, what is a collection of breakthrough results from \cite{ahlswede2000network,ho2006random}.
\begin{theo}[\cite{ahlswede2000network,ho2006random} ]
The multicast capacity $C$ of a network with $1$ source and $T$ destination nodes, is equal to the minimum number among all capacities of minimum source-destination cuts.
The capacity is achievable using random linear network coding.
\end{theo}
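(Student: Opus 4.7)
The plan is to prove the converse and achievability separately. The converse, that $C \le \min_t |\text{MinCut}(X, \text{DC}_t)|$, is a direct consequence of the max-flow/min-cut theorem applied to each source-destination pair independently: the number of source bits node $X$ can deliver reliably to any single destination $\text{DC}_t$ is upper-bounded by the capacity of the minimum cut separating the two, and the multicast rate cannot exceed the worst of these point-to-point limits. Equivalently, a data processing argument on the cut edges shows $H({\bf x}) \le |\text{MinCut}(X,\text{DC}_t)|$ whenever the decoding requirement $\mathcal{D}_t$ is satisfied.

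For achievability via random linear network coding, I would follow the algebraic framework of Ho \textit{et al.}~\cite{ho2006random}. Fix a field size $q$ and associate to each edge $e$ of the acyclic flow-graph a scalar message obtained as a random linear combination (coefficients i.i.d.\ uniform over $\mathbb{F}_q$) of the messages on the edges incoming to the tail of $e$. For each destination $\text{DC}_t$, the vector of messages received along its incoming edges is a linear function $T_t \cdot {\bf x}$ of the source vector, where $T_t$ is a transfer matrix whose entries are polynomials in the random edge coefficients. The decoder at $\text{DC}_t$ can recover ${\bf x}$ if and only if a relevant square sub-matrix of $T_t$ is invertible, {\it i.e.,} $\det(T_t) \neq 0$.

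The central step is to argue that $\det(T_t)$ is a non-identically-zero polynomial in the edge coefficients, for each $t$. This follows from the classical result of Ahlswede \textit{et al.}~\cite{ahlswede2000network} together with the linear network coding theorem of Li, Yeung, and Cai: for each source-destination pair $(X, \text{DC}_t)$ taken in isolation, a min-cut-achieving linear network code exists, which corresponds to a concrete assignment of the edge coefficients making $\det(T_t)$ evaluate to a nonzero value. Hence, as a formal polynomial, $\det(T_t) \not\equiv 0$. Applying the Schwartz-Zippel lemma, the probability that a uniformly random assignment in $\mathbb{F}_q^{|E|}$ zeros out $\det(T_t)$ is at most $D/q$, where $D$ is its total degree (bounded by $|E|$). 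A union bound over the $T$ destinations gives a failure probability of at most $T \cdot D / q$, which is strictly less than $1$ once $q > T \cdot D$; therefore a deterministic choice of coefficients simultaneously satisfying all $T$ decoding requirements exists, matching the min-cut upper bound.

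The main obstacle is certifying that $\det(T_t)$ is nonzero as a formal polynomial, since the naive argument produces only the existence of some \emph{scheme} (potentially nonlinear, or linear but tailored to a single destination) achieving the min-cut. A self-contained alternative would be a direct combinatorial argument: by Menger's theorem, a max-flow saturating $\text{MinCut}(X,\text{DC}_t)$ decomposes into that many edge-disjoint paths, and the Edmonds matrix trick applied to the resulting bipartite-like structure exhibits a substitution of the edge variables that yields a nonzero determinant, thus certifying non-triviality of the transfer polynomial for every $t$ without invoking the Li-Yeung-Cai theorem as a black box.
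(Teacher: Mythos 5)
The paper does not prove this statement: it is quoted verbatim as Theorem~3 with the preamble ``We state as Theorem 3, what is a collection of breakthrough results from \cite{ahlswede2000network,ho2006random},'' so there is no in-paper proof to compare against. Your sketch is a correct and standard reconstruction of that cited argument -- max-flow/min-cut (or a cut-set data-processing bound) for the converse, and the Ho \emph{et al.}\ algebraic framework with a Schwartz--Zippel union bound over the $T$ transfer determinants for achievability. One small simplification worth noting: to certify $\det(T_t)\not\equiv 0$ for a \emph{single} destination you do not need the full Li--Yeung--Cai multicast theorem; Menger's theorem already gives $|\text{MinCut}(X,\text{DC}_t)|$ edge-disjoint paths, and pure routing along them is itself a linear code with $\{0,1\}$ coefficients whose transfer matrix is a permutation matrix, directly yielding a nonzero evaluation of the determinant polynomial -- this is essentially the Edmonds-matrix observation you offer as an ``alternative,'' and it is the cleaner route. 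Also, the degree bound you quote ($D\le|E|$) is not exactly right in general (each matrix entry is a sum over paths and can have degree up to the longest path length), but this is immaterial: the determinant is a fixed polynomial of finite degree once the acyclic graph is fixed, and taking $q$ larger than that degree times $T$ suffices.
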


\begin{rem}
In our case, RLNC stands for having local encoding functions $f_{v,u}\left(\left\{{\bf m}_{(v',v)};v'\in\mathcal{I}_v\right\}\right):\mathbb{F}_q^{\sum_{v'\in\mathcal{I}_v}c(v',v)}\rightarrow \mathbb{F}_q^{c(v,u)}$, for all $u,v$, such that the outputs of each of those functions are $c(v,u)$ symbols over a $q$-ary alphabet, and each output symbol is a linear combination of the $\sum_{v'\in\mathcal{I}_v}c(v',v)$ input symbols; each of these linear combinations has coefficients that are picked uniformly at random from the $q$-ary alphabet. 
\end{rem}

We use the above results and definitions to prove the key technical lemma of this subsection.
Before we proceed with that, we present some properties of the ceiling and floor functions that are used in our proof.
\begin{prop}[\cite{knuth1989concrete} ]
\label{prop:ceil_floor}
Let $n$ and $m$ be positive integer numbers, and $x$ any real number.
Then, the following hold\\
\text{(i)} $\left\lfloor\frac{n}{m}\right\rfloor =\left\lceil\frac{n+1}{m}\right\rceil-1$,
\text{(ii)}  $\left\lceil \frac{x+m}{n}\right\rceil =\left\lceil \frac{\lceil x \rceil+m}{n}\right\rceil$,
\text{(iii)}  $\left\lceil \frac{\lceil x/m\rceil}{n}\right\rceil = \left\lceil \frac{x}{nm}\right\rceil$.
\end{prop}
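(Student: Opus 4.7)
The three identities in Proposition~\ref{prop:ceil_floor} are classical rounding identities, and my plan is to reduce all of them to a single ``nested rounding'' lemma: for every real $y$ and every positive integer $n$, $\lceil \lceil y \rceil / n \rceil = \lceil y/n \rceil$, together with the analogous floor version. I would freely combine this with the elementary facts $\lceil x + k \rceil = \lceil x \rceil + k$ and $\lfloor x + k \rfloor = \lfloor x \rfloor + k$ for any integer $k$, which are immediate from the definitions of floor and ceiling.

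For (i), I would apply the division algorithm directly: write $n = qm + s$ with $0 \le s \le m-1$, so that $\lfloor n/m \rfloor = q$. Since $n+1 = qm + (s+1)$ and $1 \le s+1 \le m$, we get $\lceil (n+1)/m \rceil = q + \lceil (s+1)/m \rceil = q+1$, and subtracting $1$ matches $\lfloor n/m \rfloor$, as claimed.

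For (ii), the identity is a one-liner from the nested rounding lemma combined with the integrality of $m$: $\lceil (x+m)/n \rceil = \lceil \lceil x+m \rceil / n \rceil = \lceil (\lceil x \rceil + m)/n \rceil$, where the first equality is nested rounding applied to $y = x+m$ and the second uses $\lceil x+m \rceil = \lceil x \rceil + m$. Identity (iii) is essentially the nested rounding lemma applied to $y = x/m$, which yields $\lceil \lceil x/m \rceil / n \rceil = \lceil (x/m)/n \rceil = \lceil x/(mn) \rceil$ directly.

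The only step requiring genuine effort, and therefore the main obstacle, is the nested rounding lemma itself. My plan for that is a squeeze argument: since $y \le \lceil y \rceil < y + 1$, dividing by the positive integer $n$ gives $y/n \le \lceil y \rceil / n < y/n + 1/n \le y/n + 1$. Taking ceilings and observing that two real numbers lying in a half-open interval of length at most $1$ share the same ceiling unless one of them is itself an integer, a short case split on whether $y/n$ is an integer closes the argument. This is precisely the kind of bookkeeping developed in detail in \cite{knuth1989concrete}, which is why the proposition is invoked by citation rather than reproven in the body of the paper.
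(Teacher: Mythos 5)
The paper invokes this proposition purely by citation to \cite{knuth1989concrete} and does not give a proof, so your self-contained derivation is new content rather than a parallel route. Your decomposition is the right one: all three identities do reduce to the nested rounding lemma $\lceil \lceil y \rceil / n \rceil = \lceil y/n \rceil$ (for real $y$ and positive integer $n$) together with the translation identity $\lceil x + k \rceil = \lceil x \rceil + k$ for integer $k$, and your derivations of (i), (ii), (iii) from these ingredients are all correct.

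The one genuine flaw is in the justification of the nested rounding lemma itself. The intermediate claim that two reals lying in a half-open interval of length at most $1$ share the same ceiling unless one of them is an integer is false: take $0.9$ and $1.1$, which differ by $0.2$, neither is an integer, yet their ceilings are $1$ and $2$. So the squeeze $y/n \le \lceil y \rceil / n < y/n + 1/n$ does not on its own pin down the ceiling, and the proposed case split on whether $y/n$ is an integer does not rescue it. The standard way to close the lemma is to sidestep the squeeze: $\lceil y/n \rceil$ is by definition the least integer $q$ with $nq \ge y$; since $nq$ is an integer, the condition $nq \ge y$ is equivalent to $nq \ge \lceil y \rceil$; hence $\lceil y/n \rceil$ is also the least integer $q$ with $nq \ge \lceil y \rceil$, which is precisely $\lceil \lceil y \rceil / n \rceil$. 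Substituting that one-line argument for the interval claim makes your proof complete.
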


\noindent We now proceed with the main lemma.
\begin{lem}
Let  $d = n-\left\lceil\frac{M}{\alpha}\right\rceil-\left\lceil\frac{M}{r\cdot \alpha}\right\rceil+2.$
Then, then the multicast capacity of the $\mathcal{G}(n,r,d,\alpha)$ network is equal to
\begin{equation}
C= \left\lceil\frac{M}{\alpha}\right\rceil\alpha\ge M.
\end{equation}
\end{lem}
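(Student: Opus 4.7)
The plan is to apply Theorem~3 (the max-flow / min-cut characterization of multicast capacity) and then compute the bottleneck $X$-to-$\text{DC}_t$ cut of $\mathcal{G}(n,r,d,\alpha)$. Since the network has a single source $X$ and destinations $\text{DC}_1,\ldots,\text{DC}_T$, Theorem~3 gives $C=\min_{t\in[T]}|\text{MinCut}(X,\text{DC}_t)|$, so it suffices to compute this min-cut as a function of the index set $\mathcal{F}_t$ and then minimize over $t$.

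Fix $t\in[T]$. For each group index $i\in\left[\,n/(r+1)\,\right]$, let $G_i$ denote the block of $r+1$ intermediate-node indices fed by $\Gamma_i$, and set $k_i:=|\mathcal{F}_t\cap G_i|$, so that $\sum_i k_i=n-d+1$ and $0\le k_i\le r+1$. The key structural observation is that the flow graph decouples across groups: there are no edges between $Y$- or $\Gamma$-nodes belonging to different groups. Hence any $X$-to-$\text{DC}_t$ cut partitions into independent sub-cuts, one per group. Within group $i$, the cut must disconnect $X$ from every $Y_j^{\text{out}}$ with $j\in\mathcal{F}_t\cap G_i$; the cheapest way to do so is either to cut the single edge $(X,\Gamma_i)$ at cost $r\alpha$ (killing all paths through $\Gamma_i$ at once) or to cut the $k_i$ bottleneck edges $(Y_j^{\text{in}},Y_j^{\text{out}})$ at cost $\alpha$ apiece (the edges $(\Gamma_i,Y_j^{\text{in}})$ of capacity $r\alpha$ are never cheaper, and mixed choices only waste capacity). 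Thus each group contributes exactly $\alpha\cdot\min(r,k_i)$, giving
\[
|\text{MinCut}(X,\text{DC}_t)|=\alpha\sum_{i=1}^{n/(r+1)}\min(r,k_i).
\]

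To compute $C$, I would minimize the right-hand side over the valid vectors $(k_i)$. Since $\min(r,r+1)=r<r+1$ while $\min(r,k)=k$ for $k\le r$, saturating a group to $k_i=r+1$ saves exactly one unit versus spreading its mass across non-saturated groups. The optimizer therefore fills as many groups as possible; writing $n-d+1=q(r+1)+s$ with $0\le s\le r$, the minimum is $qr+s$. Substituting the given $d$ and setting $K:=\lceil M/\alpha\rceil$, Proposition~\ref{prop:ceil_floor}(iii) yields $\lceil M/(r\alpha)\rceil=\lceil K/r\rceil$, so $n-d+1=K+\lceil K/r\rceil-1$. A two-case analysis on $K=ar+b$ with $0\le b<r$ (namely $b=0$, where $n-d+1=(a-1)(r+1)+r$, and $b\ge 1$, where $n-d+1=a(r+1)+b$) then shows $qr+s=K$ in both cases, and hence $C=K\alpha=\lceil M/\alpha\rceil\alpha\ge M$.

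The main obstacle is the decoupling step: one must argue rigorously that no ``mixed'' cut (e.g.\ using $(X,\Gamma_i)$ together with some individual bottleneck edges inside the same group) can strictly beat the per-group minimum $\alpha\cdot\min(r,k_i)$, and that summing the per-group optima indeed corresponds to a valid global $X$-to-$\text{DC}_t$ cut. Once decoupling is in hand, the ceiling arithmetic via Proposition~\ref{prop:ceil_floor} is routine, and the side condition $r\le n-d$ ensures $n-d+1\ge r+1$, so that at least one group can be packed full and the parametrization $n-d+1=q(r+1)+s$ is well-posed.
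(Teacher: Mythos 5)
Your proposal is correct and follows essentially the same route as the paper: invoke the max-flow/min-cut characterization (Theorem~3), exploit the group-by-group structure of the flow graph, note that cutting $(X,\Gamma_i)$ at cost $r\alpha$ is cheaper than cutting $r+1$ bottleneck edges at cost $(r+1)\alpha$, and minimize over data collectors by packing as many full $(r+1)$-groups as possible; your $qr+s$ with $n-d+1=q(r+1)+s$ is exactly the paper's $n-d+1-\left\lfloor\tfrac{n-d+1}{r+1}\right\rfloor$. The only genuine (and welcome) deviations are presentational: you make the per-group decoupling and the $\alpha\sum_i\min(r,k_i)$ accounting explicit where the paper argues informally, and you replace the paper's chain of floor/ceiling identities from Proposition~\ref{prop:ceil_floor} with a direct two-case analysis on $K=\lceil M/\alpha\rceil=ar+b$, which is cleaner and incidentally makes the paper's footnoted claim (that $(n-d+1)/(r+1)$ is never an integer) unnecessary.
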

\begin{proof}
Using Theorem 3, the capacity of $\mathcal{G}(n,r,d,\alpha)$ is equal to 
$$\min_{t\in[T]} \left|\text{MinCut}\left(X,\text{DC}_t\right)\right|.$$
Let us calculate the minimum cut capacity among all minimum cuts.
Let us denote as the $i$-th $(r+1)$-group, the set of $r+1$ intermediate nodes $Y_j^{\text{out}}$ that can be reached from $\Gamma_i$.
Now consider a DC that connects to a set of $n-d+1$ nodes {\it including} all the nodes of, say, the first $(r+1)$-group, and assume without loss of generality that this is $\text{DC}_1$.
There are two (meaningful) choices for $\text{Cut}(X,\text{DC}_1)$:
{\color{black}{\it i)} it can consist of all $(r+1)$ edges $(Y_i^{\text{in}},Y_i^{\text{out}})$, $i\in [r+1]$, of the $(r+1)$-group,
 or
 {\it ii)} it can consist of simply the $(X,\Gamma_1)$ edge.}\footnote{{\color{black}The  assumption $r\le n-d$ is made such that $n-d+1\ge r+1$. This implies that a DC has to connect to at least $r+1$ nodes.}}
 
Clearly, the latter choice leads to a smaller cut capacity, since  $(X,\Gamma_1)$  has capacity $r\cdot \alpha$, whereas the $r+1$ edges $(Y_i^{\text{in}},Y_i^{\text{out}})$, $i\in [r+1]$, have an aggregate capacity of $(r+1)\cdot \alpha$.
Hence, for every cut that includes $r+1$ edges of the $(Y_j^{\text{in}},Y_j^{\text{out}})$ kind that belong to the same $(r+1)$-group, say the $i$-th, then $(X,\Gamma_i)$ can be {\color{black}used} instead, reducing the capacity of such cut.
Therefore, the smallest source-destination cut is the one that contains the largest possible number of $(X,\Gamma_i)$ edges.

Now, the minimum aggregate capacity among all these $T$ cuts, {\it i.e.}, $ \min_{t=1,\ldots, T} |\text{MinCut}(X,\text{DC}_t)|$, will be the one that corresponds to the minimum cut of the DC that covers entirely as many  $(r+1)$-groups as possible.
Since the total number of $Y_j^{\text{out}}$ nodes that a DC connects to is $n-d+1$, then the number of $(r+1)$-groups it can entirely cover is\footnote{We would like to note here that the ratio inside the floor function is never an integer number: if it was, then all DCs could connect to exactly one less $Y_i^{\text{out}}$ node while maintaining exactly the same source-destination cut.}
$\left\lfloor\frac{n-d+1}{r+1}\right\rfloor.$
The minimum cut will hence include 
$$n_1 = \left\lfloor\frac{n-d+1}{r+1}\right\rfloor$$ edges of the $(X,\Gamma_i)$ kind, which contribute to the cut an aggregate capacity of
$n_1 r\alpha$.
The remaining capacity comes from cutting a number of 
$$n_2 = n-d+1-n_1=n-d+1-(r+1)\left\lfloor\frac{n-d+1}{r+1}\right\rfloor$$ edges of the $(Y_i^{\text{in}},Y_i^{\text{out}})$ kind.
Therefore, we have that the smallest source-DC cut is equal to 
{\small
\begin{align}
\min_{t\in[T]} \left|\text{MinCut}\left(X,\text{DC}_t\right)\right|& = n_1\cdot r\cdot \alpha + n_2\cdot \alpha
= \left(n-d+1-\left\lfloor\frac{n-d+1}{r+1}\right\rfloor\right)\alpha 
= \left(\left\lceil\frac{M}{\alpha}\right\rceil+\left\lceil\frac{M}{r\alpha}\right\rceil-1-\left\lfloor\frac{\left\lceil\frac{M}{\alpha}\right\rceil+\left\lceil\frac{M}{r\alpha}\right\rceil-1}{r+1}\right\rfloor\right)\alpha \nonumber\\
& \overset{(i)}{=} \left(\left\lceil\frac{M}{\alpha}\right\rceil+\left\lceil\frac{M}{r\alpha}\right\rceil-1-\left\lceil\frac{\left\lceil\frac{M}{\alpha}\right\rceil+\left\lceil\frac{M}{r\alpha}\right\rceil}{r+1}\right\rceil+1\right)\alpha\nonumber
\overset{(iii)}{=} \left(\left\lceil\frac{M}{\alpha}\right\rceil+\left\lceil\frac{M}{r\alpha}\right\rceil-\left\lceil\frac{\left\lceil\frac{M}{\alpha}\right\rceil+\left\lceil\frac{\lceil M/\alpha \rceil}{r}\right\rceil}{r+1}\right\rceil\right)\alpha\\
&
\overset{(ii)}{=} \left(\left\lceil\frac{M}{\alpha}\right\rceil+\left\lceil\frac{M}{r\alpha}\right\rceil-\left\lceil\frac{\left\lceil\frac{M}{\alpha}\right\rceil+\left\lceil\frac{M}{\alpha}\right\rceil\frac{1}{r}}{r+1}\right\rceil\right)\alpha\nonumber= \left(\left\lceil\frac{M}{\alpha}\right\rceil+\left\lceil\frac{M}{r\alpha}\right\rceil-\left\lceil\frac{\left\lceil\frac{M}{\alpha}\right\rceil\frac{r+1}{r}}{r+1}\right\rceil\right)\alpha\nonumber\\
&= \left(\left\lceil\frac{M}{\alpha}\right\rceil+\left\lceil\frac{M}{r\alpha}\right\rceil-\left\lceil\frac{\left\lceil\frac{M}{\alpha}\right\rceil}{r}\right\rceil\right)\alpha\overset{(iii)}{=} \left(\left\lceil\frac{M}{\alpha}\right\rceil+\left\lceil\frac{M}{r\alpha}\right\rceil-\left\lceil\frac{M}{r\alpha}\right\rceil\right)\alpha =\left\lceil\frac{M}{\alpha}\right\rceil\cdot\alpha\ge M,
\end{align}
}where on the second, third, and fourth lines of derivations, we explicitly state which of the three properties of the ceiling/floor function {\color{black}found in Proposition~\ref{prop:ceil_floor}} we are using.
The above establishes our lemma.
\end{proof}

\noindent Using Lemma 2, Theorem 3, and Lemma 1, we obtain the following corollary.
\begin{cor}
There exists a capacity achieving scheme for $\mathcal{G}(n,r,d,\alpha)$, whose local encoding functions $f_{(Y^{\text{in}}_i,Y^{\text{out}}_i)}$, for $i\in[n]$, map to a code of length $n$ that encodes $M^*\in\left[M, \left\lceil \frac{M}{\alpha}\right\rceil\alpha\right]$ source symbols in $n$ coded symbols of size $\alpha$, and the code has distance\footnote{Observe that here we obtain the result for $M^*\in \left[M, \left\lceil \frac{M}{\alpha}\right\rceil\alpha\right]$.
One can easily inspect that by substituting the $M^*$ value in the distance bound, this value does indeed respect it.}
$d = n-\left\lceil\frac{M}{\alpha}\right\rceil-\left\lceil\frac{M}{r\cdot \alpha}\right\rceil+2$.
\end{cor}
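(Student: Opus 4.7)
The plan is to prove the Corollary as a direct chaining of the three preceding results---Lemma~2, Theorem~3, and Lemma~1---since the flow graph $\mathcal{G}(n,r,d,\alpha)$ and the $T$ decoding requirements have been deliberately engineered so that length, per-symbol size, and distance fall out automatically once capacity is achieved.

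First, I would instantiate Lemma~2 with the stated value of $d = n - \lceil M/\alpha \rceil - \lceil M/(r\alpha) \rceil + 2$ to conclude that the multicast capacity of $\mathcal{G}(n,r,d,\alpha)$ equals $C = \lceil M/\alpha \rceil \alpha$, which is at least $M$. Next, I would invoke Theorem~3 (the RLNC multicast achievability of Ahlswede et al./Ho et al.) to assert the existence of a capacity-achieving scheme, obtained by random linear network coding over a sufficiently large finite field, whose local encoding functions simultaneously satisfy all $T$ decoding requirements $\mathcal{D}_1,\ldots,\mathcal{D}_T$ whenever the source entropy is at most $C$. In particular, for any $M^*$ with $M \le M^* \le \lceil M/\alpha \rceil \alpha$ we may set the source entropy to $M^*$: since $M^* \le C$, every DC still recovers ${\bf x}$, so the decoding requirements hold verbatim for this $M^*$.

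Finally, I would apply Lemma~1 to the $n$ edge-encoding functions $f_{(Y_i^{\text{in}}, Y_i^{\text{out}})}$ of this scheme. Lemma~1 translates them into global maps $f_i({\bf x}): \mathbb{F}_2^{M^*} \to \mathbb{F}_2^\alpha$ whose outputs $Y_i = f_i({\bf x})$ form a length-$n$ code over an $\alpha$-bit alphabet such that every $(n-d+1)$-subset of coded symbols recovers ${\bf x}$, which is precisely the distance-$d$ condition. Composing the three steps yields a code with all of the parameters claimed in the Corollary.

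The main point is not really an obstacle but bookkeeping: one must verify that the admissible range of $M^*$ is exactly $[M,\lceil M/\alpha\rceil\alpha]$ (the lower endpoint is forced by the LRC definition and the upper endpoint is the capacity itself), and that the distance expression depends only on $M$, $r$, $\alpha$---not on $M^*$---so the same $d$ holds throughout the range. Note that the Corollary only asserts distance and does \emph{not} claim locality of the resulting code; locality is an additional property that follows separately from the $\Gamma_i$-to-$(r+1)$-group edge structure of $\mathcal{G}(n,r,d,\alpha)$ and would be argued in the proof of the parent achievability theorem rather than here.
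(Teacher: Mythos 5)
Your proof is correct and follows exactly the paper's intended route: the paper gives the Corollary as an immediate consequence of Lemma~2, Theorem~3, and Lemma~1 in that order, and your write-up simply spells out that chaining. You also correctly note the two fine points the paper leaves to the reader (the admissible range of $M^*$ and the fact that the Corollary asserts distance but not locality), so nothing is missing.
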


Observe that we are not done yet: we still have to prove that be above code has locality $r$.
The next subsection finalizes our proof, by showing that RLNC on $\mathcal{G}(n,r,d,\alpha)$ indeed implies codes with locality $r$ and distance matching our bounds, for a sufficiently large finite field.

\subsection{Establishing the locality of the code and concluding the proof}

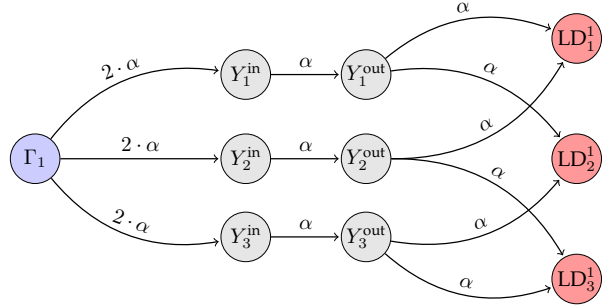
\begin{figure}[h]
\hrulefill
\begin{center}
\scalebox{0.8}{
\begin{tikzpicture}
\node[ minimum width=15mm, place,draw=black,  fill=blue!20] (gamma1in) at (1,0) [place] {$\Gamma_1$} ;
\node[ minimum width=15mm, place,draw=black,  fill=black!10] (Y1in) at (3.5+1,1.3+0.1) [place] {$Y_1^{\text{in}}$} ;
\node[ minimum width=15mm, place,draw=black,  fill=black!10] (Y1out) at (5.5+1,1.3+0.1) [place] {$Y_1^{\text{out}}$} ;
\node[ minimum width=15mm, place,draw=black,  fill=black!10] (Y2in) at (3.5+1,0) [place] {$Y_2^{\text{in}}$} ;
\node[ minimum width=15mm, place,draw=black,  fill=black!10] (Y2out) at (5.5+1,0) [place] {$Y_2^{\text{out}}$} ;
\node[ minimum width=15mm, place,draw=black,  fill=black!10] (Yrin) at (3.5+1,-1.3) [place] {$Y_3^{\text{in}}$} ;
\node[ minimum width=15mm, place,draw=black,  fill=black!10] (Yrout) at (5.5+1,-1.3) [place] {$Y_3^{\text{out}}$} ;
\node[ minimum width=15mm, place,draw=black,  fill=red!40] (LD1) at (10,2) [place] {$\text{LD}_1^1$} ;
\node[ minimum width=15mm, place,draw=black,  fill=red!40] (LD2) at (10,0) [place] {$\text{LD}_2^1$} ;
\node[ minimum width=15mm, place,draw=black,  fill=red!40] (LD3) at (10,-2) [place] {$\text{LD}_3^1$} ;
\path (Y1in) edge[semithick, post] node [above, midway]  {$\alpha$} (Y1out);
\path (Y2in) edge[semithick, post] node [above, midway]  {$\alpha$} (Y2out);
\path (Yrin) edge[semithick, post] node [above, midway]  {$\alpha$} (Yrout);
\path (gamma1in) edge[semithick, post, bend left] node [midway, sloped, above] {$2\cdot\alpha$} (Y1in);
\path (gamma1in) edge[semithick, post] node [midway, above] {$2\cdot \alpha$} (Y2in);
\path (gamma1in) edge[semithick, post, bend right] node [midway, sloped, above] {$2\cdot \alpha$} (Yrin);
\path (Y1out) edge[semithick, post, bend left] node [above, sloped, midway]  {$\alpha$} (LD1);
\path (Y2out) edge[semithick, post, bend right] node [above, sloped, midway]  {$\alpha$} (LD1);
\path (Y1out) edge[semithick, post, bend left] node [above, sloped, midway]  {$\alpha$} (LD2);
\path (Yrout) edge[semithick, post, bend right] node [above, sloped, midway]  {$\alpha$} (LD2);
\path (Y2out) edge[semithick, post, bend left] node [above, sloped, midway]  {$\alpha$} (LD3);
\path (Yrout) edge[semithick, post, bend right] node [above, sloped, midway]  {$\alpha$} (LD3);
\end{tikzpicture} 
}
\end{center}
\caption{The $\mathcal{G}_1$ subgraph induced by the first $(r+1)$-group of a $\mathcal{G}(n,r=2,d,\alpha)$ network.
{\color{black}The additional $\text{LD}^i_j$ are local data collectors that are conceptual.
These local DCs are not present in the original graph, and are used here to finalize the proof of Theorem \ref{theo:achieve}. 
We use them to establish the locality of the code obtained through the RLNC capacity achieving scheme on $\mathcal{G}(n,r=2,d,\alpha)$.}
}
\hrulefill
\label{fig:local_DC}
\end{figure}

To establish the locality of the code obtained in the previous subsection, we will show that an extra set of {\it local decoding requirements} are satisfied when RLNC is used.
For this part of the proof we will focus on the subgraphs induced by the $(r+1)$-groups.
Let $\mathcal{G}_{i}$, be the subgraph that is induced by the vertices 
$$\mathcal{V}_i = \left\{\Gamma_i, 
Y^{\text{in}}_{(i-1)\cdot(r+1)+1}, 
Y^{\text{out}}_{(i-1)\cdot(r+1)+1},
\ldots,
Y^{\text{in}}_{i\cdot(r+1)}, \ldots, Y^{\text{out}}_{i\cdot(r+1)}\right\},$$
for any $i\in \left[\frac{n}{r+1}\right]$.
Let us assume that for each of these subgraphs there exists an additional number of ${r+1\choose r}=r+1$ {\it local Data Collector} nodes, $\text{LD}^i_1,\ldots,\text{LD}^i_{r+1}$.
Each local DC is connected to one of the $r+1$ possible $r$-subsets of $Y_j^{\text{out}}$ nodes of $\mathcal{G}_i$, with 
$$j \in \{(i-1)(r+1)+1,\ldots,  i(r+1)\}.$$
In Fig. \ref{fig:local_DC}, we give an example of $\mathcal{G}_1$ with the added local DCs.

{\color{black}Each of these local DCs has a decoding requirement: it requires to be able to decode what was transmitted by $\Gamma_i$.}
Let us refer to such a decoding requirement for the $j$-th local DC of $\mathcal{G}_i$ as $\mathcal{LD}^i_j$.
{\color{black}
\begin{rem}
\label{rem:locality_via_LDC}
Observe that the decoding requirement $\mathcal{LD}^i_j$ implies that the $j$-th local DC can reconstruct any single of the $r+1$ messages ${\bf m}_{(Y^{\text{in}}_j,Y^{\text{out}}_j)}$, with
$j \in \{(i-1)(r+1)+1,\ldots, i(r+1)\}.$ 
This is true since all these $r+1$ messages are functions of what is transmitted by the $\Gamma_i$ node.
\end{rem}
The above observation will be used to establish the locality of the codes obtained from RLNC on  $\mathcal{G}(n,r,d,\alpha)$. }
Before we do that, we will state the following lemma, which will help us to conclude our proof.

\begin{lem}
When RLNC is used on $\mathcal{G}_i$, the decoding requirement $\mathcal{LD}^i_j$ is equivalent to a full-rank requirement $\mathcal{FR}_j^i$ on an $r\cdot \alpha\times r\cdot \alpha$ matrix with random i.i.d. coefficients.
\end{lem}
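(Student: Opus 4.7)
The plan is to trace the RLNC linear map from the effective source of the subgraph, namely $\Gamma_i$, down to $\text{LD}^i_j$, identify the $r\alpha \times r\alpha$ matrix $\mathbf{B}_j$ whose invertibility is equivalent to $\mathcal{LD}^i_j$, and then argue that the entries of $\mathbf{B}_j$ can be taken to be i.i.d.\ uniform over $\mathbb{F}_q$.

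First I would denote by $\mathbf{s}\in\mathbb{F}_q^{r\alpha}$ the message entering $\Gamma_i$ from $X$ along the edge of capacity $r\alpha$. Under RLNC, every outgoing packet at an intermediate node is a random linear combination of that node's inputs with i.i.d.\ uniform coefficients drawn independently across edges. Accordingly I would attach to each edge of $\mathcal{G}_i$ its own random encoding matrix: $\mathbf{G}_k\in\mathbb{F}_q^{r\alpha\times r\alpha}$ on $(\Gamma_i, Y_k^{\text{in}})$, $\mathbf{A}_k\in\mathbb{F}_q^{\alpha\times r\alpha}$ on $(Y_k^{\text{in}}, Y_k^{\text{out}})$, and $\mathbf{H}_k^j\in\mathbb{F}_q^{\alpha\times\alpha}$ on $(Y_k^{\text{out}}, \text{LD}^i_j)$, all mutually independent.

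Next I would express the observation at $\text{LD}^i_j$: the $\alpha$-dim packet arriving along the edge from $Y_k^{\text{out}}$ equals $\mathbf{H}_k^j \mathbf{A}_k \mathbf{G}_k \mathbf{s}$, and vertically stacking the $r$ such packets (one per $k$ in the $r$-subset $\mathcal{R}_j$ of the $i$-th $(r+1)$-group that $\text{LD}^i_j$ is incident to) produces $\mathbf{B}_j \mathbf{s}$, where $\mathbf{B}_j$ is an $r\alpha\times r\alpha$ matrix with $k$-th block row $\mathbf{H}_k^j \mathbf{A}_k \mathbf{G}_k$. By Remark~\ref{rem:locality_via_LDC}, the requirement $\mathcal{LD}^i_j$ — recovery of $\mathbf{s}$ at $\text{LD}^i_j$ — is precisely $\det(\mathbf{B}_j)\neq 0$, giving the first half of the desired equivalence.

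Finally I would argue that $\mathbf{B}_j$'s entries can be identified with i.i.d.\ uniform coefficients. The key invariance is that right-multiplication by any invertible $\mathbf{G}_k$ and left-multiplication by any invertible $\mathbf{H}_k^j$ are bijections of $\mathbb{F}_q^{\alpha\times r\alpha}$, and so preserve the uniform distribution. Since the RLNC coefficients are drawn independently across edges, conditional on the event that every $\mathbf{G}_k$ and every $\mathbf{H}_k^j$ is invertible, each block $\mathbf{H}_k^j \mathbf{A}_k \mathbf{G}_k$ is uniform on $\mathbb{F}_q^{\alpha\times r\alpha}$ and the blocks for different $k\in\mathcal{R}_j$ remain independent. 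The stacked matrix $\mathbf{B}_j$ is therefore, on this event, an $r\alpha\times r\alpha$ matrix with i.i.d.\ uniform entries, whose full-rank condition coincides with $\mathcal{FR}_j^i$.

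The main subtlety is the conditioning in the last step: unconditionally, the entries of $\mathbf{B}_j$ are polynomial functions of the underlying RLNC coefficients rather than literally i.i.d. I expect the proof to dispose of this either by noting that the invertibility event has probability tending to $1$ as $q\to\infty$ (in line with the approach of~\cite{ho2006random}), or — more directly — by fixing the particular realization $\mathbf{G}_k=\mathbf{I}$ and $\mathbf{H}_k^j=\mathbf{I}$, which is a valid choice for the square local encoders on those edges and collapses $\mathbf{B}_j$ to the vertical stacking of the independent $\mathbf{A}_k$, literally an $r\alpha\times r\alpha$ matrix with i.i.d.\ uniform entries. Either route establishes the equivalence with $\mathcal{FR}_j^i$ that the lemma asserts.
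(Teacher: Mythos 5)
Your proposal is correct and takes essentially the same route as the paper: trace the linear map from $\Gamma_i$'s source $\mathbf{z}_1$ through the $(Y_k^{\text{in}},Y_k^{\text{out}})$ edges to the $r\alpha$-dimensional observation at $\text{LD}^i_j$, identify it as $\mathbf{z}_1$ multiplied by an $r\alpha\times r\alpha$ random matrix formed by stacking the $r$ independent $r\alpha\times\alpha$ edge-encoding blocks, and note that recoverability of $\mathbf{z}_1$ is exactly full rank of that matrix. The only detail worth noting is that the paper silently adopts exactly the ``identity realization'' option you list second: it takes $Y_j^{\text{in}}$ to receive $\mathbf{z}_1$ verbatim and $Y_j^{\text{out}}$ to forward what it received, so the only randomness is on the $(Y_k^{\text{in}},Y_k^{\text{out}})$ edges and the stacked matrix has literally i.i.d.\ entries without any conditioning; your first option (condition on the square maps $\mathbf{G}_k$, $\mathbf{H}_k^j$ being invertible and use the bijection argument) is a valid and slightly more careful way to justify the same conclusion under fully random RLNC on all edges.
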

\begin{proof}
Without loss of generality, let us consider $\mathcal{G}_1$, moreover, let for simplicity 
$${\bf z}_1\in \mathbb{F}_q^{1\times r\cdot \alpha}$$
 be the source message that $\Gamma_1$ wishes to transmit to the local DCs.
Since the capacity of a $(\Gamma_1,Y_j^{\text{in}})$ edge is $r\cdot \alpha$,  for $j\in[r+1]$, then node $Y_j^{\text{in}}$ receives ${\bf z}_1$.
Moreover, due to the RLNC scheme used, the coefficients of the random linear combinations in the local encoding functions are picked independently.
Hence, node $Y_j^{\text{in}}$ will transmit to node $Y_j^{\text{out}}$ a vector of $\alpha$ symbols: 
$${\bf z}_1{\bf A}_{1,j}$$
where ${\bf A}_{1,j}$ is an $r\cdot \alpha \times \alpha$ matrix of random i.i.d. coefficients.
Then, any node $Y_j^{\text{out}}$ transmits to the local DCs of $\mathcal{G}_1$ exactly what it received, {\it i.e., } ${\bf z}_1{\bf A}_{1,j}$, since the capacity of the edge $(Y_j^{\text{out}}, \text{LD}_i^1)$ is $\alpha$.
Hence, any local DC receives $r$ vectors of size $\alpha$, which if put together form a vector of size $r\cdot \alpha$; this vector, for local DC $j$, can be re-written as ${\bf z}_1{\bf A}_j^1$, where ${\bf A}_j^1$ is an $r\cdot \alpha \times r\cdot \alpha$ matrix of random i.i.d. coefficients.
{\color{black}Hence, any local DC decoding requirement is equivalent a requirement on a square matrix of random coefficients being full-rank.
Let us refer to this full-rank requirement as $\mathcal{FR}_j^i$.}
\end{proof}

{\color{black}Observe that $\mathcal{FR}_j^i$ is a requirement that can be stated independently of the existence of local DCs.
Hence, we can now go back on $\mathcal{G}(n,r,d,\alpha)$ and show that RLNC allows all local decoding requirements and all $\mathcal{FR}_j^i$ conditions to be satisfied {\it simultaneously}.}

\begin{lem}
Let 
$$d = n-\left\lceil\frac{M}{\alpha}\right\rceil-\left\lceil\frac{M}{ra}\right\rceil+2$$
and let us employ RLNC on $\mathcal{G}(n,r,d,\alpha)$.
Then, all decoding requirements $\mathcal{D}_i$ of $\mathcal{G}(n,r,d,\alpha)$ and  all full rank requirements $\mathcal{FR}_j^i$ will be simultaneously satisfied, with nonzero probability, when the finite field is sufficiently large.
\end{lem}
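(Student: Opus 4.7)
I plan to recast each requirement as a polynomial nonvanishing condition in the random edge coefficients, and then invoke the Schwartz--Zippel lemma to conclude that all of them hold simultaneously with positive probability over a sufficiently large field. This is the standard algebraic framework of~\cite{ho2006random}, augmented with an argument specific to our local requirements.

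\textbf{Step 1: Polynomial formulation.} Under RLNC, every message transmitted on an edge is a linear combination of the source bits whose coefficients are polynomials in the random coefficients chosen at the local encoders. Each global decoding requirement $\mathcal{D}_t$ is then equivalent to the determinant $p_t$ of the $C\times C$ transfer matrix from $X$ to $\text{DC}_t$ being nonzero. By the previous lemma, each local requirement $\mathcal{FR}_j^i$ is equivalent to the determinant $q_j^i$ of the $r\alpha\times r\alpha$ matrix ${\bf A}_j^i$ being nonzero. So it suffices to show that the product $P = \prod_{t=1}^{T} p_t \cdot \prod_{i,j} q_j^i$ is not identically zero as a polynomial in the random coefficients.

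\textbf{Step 2: Each factor is a nonzero polynomial.} For the global factors, the min-cut computation in the capacity lemma establishes that every source-to-$\text{DC}_t$ min-cut is at least $C=\lceil M/\alpha\rceil\alpha$, and hence Theorem 3 guarantees a deterministic coefficient assignment achieving $\mathcal{D}_t$; this witnesses $p_t\not\equiv 0$. For the local factors, I would exhibit an explicit assignment inside the subgraph $\mathcal{G}_i$: pick the matrices ${\bf A}_{i,\ell}$ so that $Y_\ell^{\text{out}}$ transmits the $\ell$-th disjoint $\alpha$-block of the source vector ${\bf z}_i$ injected by $\Gamma_i$. Then any $r$ of the $r+1$ output blocks pin down $r\alpha$ distinct coordinates of ${\bf z}_i$, so ${\bf A}_j^i$ is a permutation matrix and $q_j^i\not\equiv 0$. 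Crucially, as observed in the preamble to this subsection, $q_j^i$ depends only on coefficients strictly internal to $\mathcal{G}_i$, so these local witnesses can be chosen group by group without any ``algebraic interference'' between different $(r+1)$-groups, nor with the source-to-$\Gamma_i$ coefficients that drive the global factors $p_t$.

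\textbf{Step 3: Schwartz--Zippel.} Since each factor in $P$ is a nonzero polynomial and its variables decouple across the $(r+1)$-groups, $P$ itself is a nonzero polynomial of bounded total degree in finitely many variables. By the Schwartz--Zippel lemma, sampling the random coefficients uniformly from a finite field $\mathbb{F}_q$ of sufficiently large order $q$ gives $P\neq 0$ with positive probability, and at any such sample point all $\mathcal{D}_t$ and all $\mathcal{FR}_j^i$ are simultaneously satisfied, proving the lemma. The main obstacle is precisely the step of certifying $P\not\equiv 0$: one must rule out the possibility that the local and global conditions force incompatible polynomial identities. The no-interference structural property of $\mathcal{G}(n,r,d,\alpha)$ is exactly what allows independent, simultaneous witnessing of the local and global factors and makes the product argument go through.
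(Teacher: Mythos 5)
Your proposal follows the same algebraic-probabilistic route as the paper: the paper applies a union bound over the global failure event $E_{\mathcal{G}}$ and the $n$ local failure events $E_{\mathcal{FR}_j^i}$, citing \cite{ho2006random} for the individual failure probabilities, and your Schwartz--Zippel product argument is the standard unwinding of exactly that bound. However, the explicit witness you give for the local factors in Step~2 is broken. The source vector ${\bf z}_i$ that $\Gamma_i$ injects has length $r\alpha$, so it partitions into only $r$ disjoint $\alpha$-blocks, while the group contains $r+1$ output nodes $Y_\ell^{\text{out}}$; assigning ``the $\ell$-th disjoint $\alpha$-block'' of ${\bf z}_i$ to $Y_\ell^{\text{out}}$ leaves one node with no fresh block, and the $r$-subset that drops a node carrying a genuine block cannot determine all of ${\bf z}_i$, so the corresponding ${\bf A}_j^i$ is singular rather than a permutation. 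A correct witness is to make the $r+1$ outputs an $(r+1,r)$ MDS code on the $r$ source blocks --- e.g.\ $Y_\ell^{\text{out}}$ carries block $\ell$ for $\ell\le r$ and $Y_{r+1}^{\text{out}}$ carries the XOR of all $r$ blocks --- after which every $r$-subset recovers ${\bf z}_i$ and each $q_j^i$ evaluates to $\pm1$. (In fact no constructed witness is needed: the preceding lemma already identifies ${\bf A}_j^i$ as a square matrix whose entries are the free i.i.d.\ RLNC coefficients themselves, so $\det{\bf A}_j^i$ is nonzero as a polynomial by evaluating at the identity.) A secondary overstatement: you present the no-interference structure of $\mathcal{G}(n,r,d,\alpha)$ as what ``makes the product argument go through,'' but the polynomial ring over a field is an integral domain, so $P=\prod_t p_t\cdot\prod_{i,j}q_j^i$ is nonzero as soon as each factor is, whether or not their variables overlap; the structural decoupling is helpful for building local witnesses cleanly, not for certifying $P\not\equiv 0$.
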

\begin{proof}
Let $E_{\mathcal{G}}$ denote the event that some of the $T$ DCs of $\mathcal{G}(n,r,d,\alpha)$ cannot decode ${\bf x}$ successfully, which say, has probability $p_1$ that is a function of the size of the finite field used by the RLNC scheme \cite{ho2006random}.
Moreover, let $E_{\mathcal{FR}_j^i}$ denote the event that $\mathcal{FR}_j^i$ is not satisfied, which say, has probability $p_2$ that is also a function of the size of the finite field used by the RLNC scheme.
Then, the probability that RLNC does not satisfy some of the above conditions is
$$\Pr\left\{E_{\mathcal{G}} \bigcup \left\{\bigcup_{i=1}^{n/(r+1)} \bigcup_{j=(i-1)(r+1)+1}^{i(r+1)} E_{\mathcal{FR}_j^i} \right\}\right\}\le \Pr\left\{E_{\mathcal{G}}\right\}
+\sum_{i=1}^{n/(r+1)} \sum_{j=(i-1)(r+1)+1}^{i(r+1)}\Pr\left\{E_{\mathcal{FR}_j^i} \right\} = p_1+n\cdot p_2.$$
We can now conclude our proof, since $p_1$ and $p_2$ can be made arbitrarily small, using a sufficiently large finite field \cite{ho2006random}.
\end{proof}

Due to the above lemma and Lemma 1, we use the $f_1({\bf x}),\ldots, f_n({\bf x})$ global encoding functions (the global representations of the $f_{(Y_i^{\text{in}},Y_i^{\text{out}})}$s) of the RLNC scheme to obtain a code that encodes a file of size $M$ in $n$ coded symbols, each of size $\alpha$; such code has distance $d$.

Moreover, since all $\mathcal{FR}_j^i$ requirements are satisfied, then
{\color{black}as mentioned in Remark~\ref{rem:locality_via_LDC}}, each output of a global encoding function $f_i({\bf x})$ can be reconstructed from the outputs of a subset of $r$ other global encoding functions: this implies repair locality $r$.
Hence, the code defined by the global encoding functions $f_i$ is an $(n,r,d,M,\alpha)$-LRC, with 
 $$d = n-\left\lceil\frac{M}{\alpha}\right\rceil-\left\lceil\frac{M}{ra}\right\rceil+2.$$
This concludes the proof of Theorem 2.

\section{Locally Repairable Codes: Explicit Constructions}
\label{sec:lrc_constructions}

In this section, we provide an explicit LRC family for the operational point on the distance trade-off, where any $k$ subsets of coded nodes can reconstruct all $k$ file symbols, {\it i.e.}, when $d=n-k+1$.
For this regime, that resembles that of an $(n,k)$-MDS code, we will show how to achieve the distance of an $(n,k)$-MDS code,  while having locality $r<<n$ and {\color{black}sacrificing} only a small fraction of the code rate: the rate of our codes will be $\frac{1}{r}\frac{k}{n}$ less than that of an $(n,k)$-MDS code.
Specifically, the code parameters for our LRCs are 
$$\left(n,r,d=n-k+1, M, \alpha = \frac{r+1}{r}\cdot \frac{M}{k}\right), \text{ such that } (r+1)|n.$$
Our codes meet the optimal distance bound for all of the above coding parameters when $(r+1) \nmid k$.

The presented  codes come with the following {\color{black}design advantages}: 
{\it i)} they achieve arbitrarily high data rates,
{\it ii)} they can be constructed using Reed-Solomon encoded blocks,
{\it iii)} the repair of a lost node requires downloading blocks and XORing them at a destination node,
and {\it iv)} their vector size, or sub-packetization length, is $r$, {\color{black}and each stored sub-symbol is over a small finite field with size proportional to $n$.
This means that we can represent each coded symbol by using only $r\cdot O(\log n)$ bits.}



%

\begin{figure}[h]
\hrulefill
\vspace{-0.3cm}
\begin{center}
\scalebox{0.8}{
\begin{tikzpicture}
[node distance =0.1 cm and 5.5cm, every place/.style={rectangle, thick,minimum size=30mm, minimum width=20mm}, bend angle=23]
\node[ minimum width=15mm ,minimum size=5mm,place, draw=white, fill= white] (Y1out) at (2,2.5) [place] {\Large MDS Pre-coding and XORing} ;

\node[ minimum width=15mm, place,draw=white,  fill=white] (x1) at (-1.45,0) [place] {
$
\begin{array}{c}
x^{(1)}_1\;-\\
x^{(1)}_2\;-\\
x^{(1)}_3\;-\\
x^{(1)}_4\;-
\end{array}
$
} ;
\node[ minimum width=15mm, place,draw=white,  fill=white] (y1) at (1.45,-4) [place] {
$
\begin{array}{c}
-\;y^{(2)}_1\\
-\;y^{(2)}_2\\
-\;y^{(2)}_3\\
-\;y^{(2)}_4\\
-\;y^{(2)}_5\\
-\;y^{(2)}_6
\end{array}
$
} ;
\node[ minimum width=15mm, place,draw=white,  fill=white] (x2) at (-1.45,-4) [place] {
$
\begin{array}{c}
x^{(2)}_1\;-\\
x^{(2)}_2\;-\\
x^{(2)}_3\;-\\
x^{(2)}_4\;-
\end{array}
$
} ;
\node[ minimum width=15mm, place,draw=white,  fill=white] (y1) at (1.45,0) [place] {
$
\begin{array}{c}
-\;y^{(1)}_1\\
-\;y^{(1)}_2\\
-\;y^{(1)}_3\\
-\;y^{(1)}_4\\
-\;y^{(1)}_5\\
-\;y^{(1)}_6
\end{array}
$
} ;
\node[ minimum width=15mm, place,draw=white,  fill=white] (x1) at (-1.45,0) [place] {
$
\begin{array}{c}
x^{(1)}_1\;-\\
x^{(1)}_2\;-\\
x^{(1)}_3\;-\\
x^{(1)}_4\;-
\end{array}
$
} ;
\node[ minimum width=15mm, place,draw=white,  fill=white] (y1) at (5,-2) [place] {
$
\begin{array}{c}
s_1=y^{(1)}_1+y^{(2)}_1\\
s_2=y^{(1)}_2+y^{(2)}_2\\
s_3=y^{(1)}_3+y^{(2)}_3\\
s_4=y^{(1)}_4+y^{(2)}_4\\
s_5=y^{(1)}_5+y^{(2)}_5\\
s_6=y^{(1)}_6+y^{(2)}_6\\
\end{array}
$
} ;
\node[ minimum width=15mm, place,draw=black,  fill=blue!10] (Y1out) at (0,0) [place] {$(6,4)$-MDS} ;
\node[ minimum width=15mm, place,draw=black,  fill=red!40] (Y1out) at (0,-4) [place] {$(6,4)$-MDS} ;

\path (1.9,1.1) edge[semithick, post]  (3.8,-0.8);
\path (1.9,1.1-0.5) edge[semithick, post]  (3.8,-0.8-0.5);
\path (1.9,1.1-2*0.5) edge[semithick, post]  (3.8,-0.8-2*0.5);
\path (1.9,1.1-3*0.5) edge[semithick, post]  (3.8,-0.8-3*0.5);
\path (1.9,1.1-4*0.5) edge[semithick, post]  (3.8,-0.8-4*0.5);
\path (1.9,1.1-5*0.5) edge[semithick, post]  (3.8,-0.8-5*0.5);

\path (1.9,-2.8-0*0.5) edge[semithick, post]  (3.8,-0.8-0*0.5);
\path (1.9,-2.8-1*0.5) edge[semithick, post]  (3.8,-0.8-1*0.5);
\path (1.9,-2.8-2*0.5) edge[semithick, post]  (3.8,-0.8-2*0.5);
\path (1.9,-2.8-3*0.5) edge[semithick, post]  (3.8,-0.8-3*0.5);
\path (1.9,-2.8-4*0.5) edge[semithick, post]  (3.8,-0.8-4*0.5);
\path (1.9,-2.8-5*0.5) edge[semithick, post]  (3.8,-0.8-5*0.5);
\end{tikzpicture}}
\end{center}
\label{fig:code}
\end{figure}
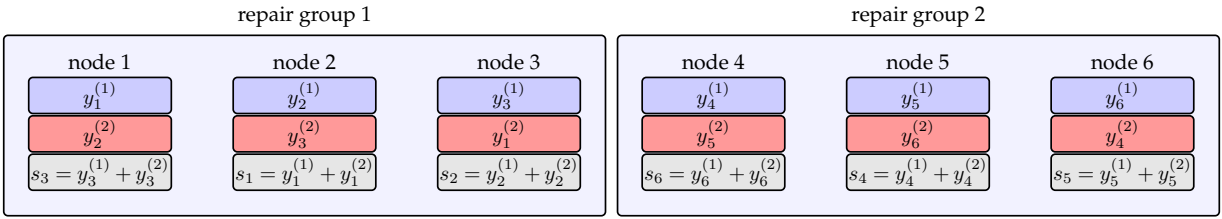
\begin{figure}[h]
\hrulefill
\begin{center}
\scalebox{0.8}{
\begin{tikzpicture}
[node distance =0.01 cm and 1cm, every place/.style={rectangle,  thick,minimum size=6mm, minimum width=23.7mm}, bend angle=23]
\node[place, rounded corners=2pt,  draw=white, fill=white] (node11) at (8.5,2) [place] {\Large Block Placement};

\node[place, rounded corners=2pt,  minimum size=30mm, minimum width =100mm, draw=black, fill=blue!5, label = above:{repair group 1}] (group1) at (3.4,-0.5) {};
\node[place, rounded corners=2pt,  minimum size=30mm, minimum width =100mm, draw=black, fill=blue!5, label = above:{repair group 2}] (group2) at(13.6,-0.5) {};

\node[place, rounded corners=2pt,  draw=black, fill=blue!20, label = above:{node 1}] (node11) {$y_1^{(1)}$};
\node[place, rounded corners=2pt,  draw=black, fill=red!40, below = of node11] (node12)  {$y_2^{(2)}$};
\node[place, rounded corners=2pt,  draw=black, fill=black!10, below = of node12] (node13)  {$s_3=y_3^{(1)}+y_3^{(2)}$};

\node[place, rounded corners=2pt,  draw=black, fill=blue!20, right = of node11, label = above:{node 2}] (node21) {$y_2^{(1)}$};
\node[place, rounded corners=2pt,  draw=black, fill=red!40, below = of node21] (node22)  {$y_3^{(2)}$};
\node[place, rounded corners=2pt,  draw=black, fill=black!10, below = of node22] (node23)  {$s_1 = y_1^{(1)}+y_1^{(2)}$};

\node[place, rounded corners=2pt,  draw=black, fill=blue!20, right = of node21, label = above:{node 3}] (node31) {$y_3^{(1)}$};
\node[place, rounded corners=2pt,  draw=black, fill=red!40, below = of node31] (node32)  {$y_1^{(2)}$};
\node[place, rounded corners=2pt,  draw=black, fill=black!10, below = of node32] (node33)  {$s_2 = y_2^{(1)}+y_2^{(2)}$};

\node[place, rounded corners=2pt,  draw=black, fill=blue!20, right = of node31, label = above:{node 4}] (node41) {$y_4^{(1)}$};
\node[place, rounded corners=2pt,  draw=black, fill=red!40, below = of node41] (node42)  {$y_5^{(2)}$};
\node[place, rounded corners=2pt,  draw=black, fill=black!10, below = of node42] (node43)  {$s_6=y_6^{(1)}+y_6^{(2)}$};

\node[place, rounded corners=2pt,  draw=black, fill=blue!20, right = of node41, label = above:{node 5}] (node51) {$y_5^{(1)}$};
\node[place, rounded corners=2pt,  draw=black, fill=red!40, below = of node51] (node52)  {$y_6^{(2)}$};
\node[place, rounded corners=2pt,  draw=black, fill=black!10, below = of node52] (node53)  {$s_4=y_4^{(1)}+y_4^{(2)}$};

\node[place, rounded corners=2pt,  draw=black, fill=blue!20, right = of node51, label = above:{node 6}] (node61) {$y_6^{(1)}$};
\node[place, rounded corners=2pt,  draw=black, fill=red!40, below = of node61] (node62)  {$y_4^{(2)}$};
\node[place, rounded corners=2pt,  draw=black, fill=black!10, below = of node62] (node63)  {$s_5=y_5^{(1)}+y_5^{(2)}$};
\end{tikzpicture}}
\end{center}
\caption{MDS pre-coding, XORing, and block placement in nodes.}
\hrulefill
\label{fig:code}
\end{figure}

\subsection{Code construction}

Let a file ${\bf x}$ of size $M = rk$ symbols\footnote{{\color{black}here the size of each symbol depends on the code construction, and is not necessarily binary. As we see in the following, the size of each symbol will be proportional to $\log(n)$ bits.}}, that is sub-packetized in $r$ parts,
 \begin{equation}
{\bf x} = \left[
{\bf x}^{(1)}
\ldots
{\bf x}^{(r)}
\right], \nonumber
 \end{equation}
with each ${\bf x}^{(i)}$, $i\in[r]$, having size $k$.
We encode each of the $r$ file parts independently, into coded vectors ${\bf y}^{(i)}$ of length $n$, where $(r+1)|n$, using an outer $(n,k)$ MDS code
 \begin{align}
{\bf y}^{(1)} = {\bf x}^{(1)}{\bf G},\;\;\ldots,\;\; {\bf y}^{(r)} = {\bf x}^{(r)}{\bf G},\nonumber
 \end{align}
where ${\bf G}$ is an $n\times k$ MDS generator matrix.

As MDS pre-codes, we use $(n,k)$-RS codes that require each of the $k$ elements to be over a finite field $\mathbb{F}_{2^p}$, for any $p$ such that $2^p\ge n$.
{\color{black}This will imply that all stored sub-symbols in our code are over a finite field of size $2^p\ge n$.}
We then generate a single parity XOR vector from all the coded vectors
 \begin{equation}
{\bf s} = \bigoplus_{i=1}^{r} {\bf y}^{(i)}. \nonumber
 \end{equation}

The above {\it pre-coding} process yields a total of $r\cdot n$ coded blocks, the ${\bf y}^{(i)}$ vectors and $n$ XOR parity blocks {\color{black}in} the ${\bf s}$ vector.
That is, we have an aggregate of $(r+1)n$ blocks available to place in $n$ nodes, hence we decide to store $r+1$ blocks per node.
{\color{black}Therefore, each node needs to have a storage capacity of}
 \begin{equation}
\alpha=\frac{M}{k}+\frac{1}{r}\frac{M}{k} =r+1 \text{ (coded blocks)}. \nonumber
\end{equation}

In Table \ref{table:LRC}, we state the circular placement of symbols in nodes of the first $(r+1)$-group .
\begin{table}
{\small
\begin{align}
\begin{array}{c}
 \\
\text{blocks of ${\bf y}^{(1)}$}\\
\vspace{0.14cm}\text{blocks of ${\bf y}^{(2)}$}\\
\vspace{0.14cm}\vdots\\
\text{blocks of ${\bf y}^{(r)}$}\\
\vspace{0.14cm}\text{blocks of ${\bf s}$}\vspace{-0.2cm}
\end{array}
\begin{array}{|c|c|c|c|c|}
\hline
\text{node }1 & \text{node }2 & \ldots& \text{node }r& \text{node }r+1\\
\hline
{\color{cyan}y^{(1)}_1} & {\color{cyan}y^{(1)}_2} & \ldots& {\color{cyan}y^{(1)}_{r}}& {\color{cyan}y^{(1)}_{r+1}}  \\
{\color{red}y^{(2)}_2} & {\color{red}y^{(2)}_3} & \ldots& {\color{black}y^{(2)}_{r+1}} & {\color{red}y^{(2)}_1}  \\
\vdots & \vdots & \vdots  & \vdots& \vdots\\
{\color{blue}y^{(r)}_r} & {\color{blue}y^{(r)}_{r+1}} & \ldots & {\color{blue}y^{(r)}_{r-2}} & {\color{blue}y^{(r)}_{r-1}}  \\
s_{r+1} & s_{1} & \ldots & s_{r-1} & s_{r} \\
\hline
\end{array}\nonumber
\end{align}
}
\caption{The first $r+1$ nodes in our code construction. These nodes belong to the first $(r+1)$-repair group. The nodes in the remaining repair groups have a block placement that follows the same circular-shifting pattern.}
\label{table:LRC}
\end{table}
There are three key properties of the block placement:
\begin{enumerate}
\item each node contains $r$ coded blocks coming from different ${\bf y}^{(l)}$ coded vectors and $1$ additional parity symbol,
\item  the blocks in the $r+1$ nodes of the $i$-th $(r+1)$-group have indices that appear only in that specific repair group, and
\item  the blocks of each row have indices that obey a circular pattern, {\it i.e.}, the first row of symbols has {\color{black}index ordering} $\{1,2,\ldots, r+1\}$, the second has ordering $\{2,3,\ldots, r+1,1\}$, and so on.
\end{enumerate}
In Fig.~\ref{fig:code}, we show an LRC of the above construction with $M=8$, $\alpha = 3$, $n=6$ and $k=4$, that has locality $2$.

\subsection{Repairing lost nodes}
\begin{figure}[h]
\hrulefill
\begin{center}
\scalebox{0.7}{
\begin{tikzpicture}
[node distance =0.01 cm and 1cm, every place/.style={rectangle,  thick,minimum size=6mm, minimum width=23.7mm}, bend angle=23]

\node[place, rounded corners=2pt,  minimum size=25mm, minimum width =30mm, dotted, draw=black, fill=magenta!50] (fn1) at (0,-0.4) {};

\node[place, rounded corners=2pt,  draw=black, fill=blue!20, label = above:{failed node 1}] (node11) {$y_1^{(1)}$};
\node[place, rounded corners=2pt,  draw=black, fill=red!40, below = of node11] (node12)  {$y_2^{(2)}$};
\node[place, rounded corners=2pt,  draw=black, fill=black!10, below = of node12] (node13)  {$s_3=y_3^{(1)}+y_3^{(2)}$};

\node[node distance =0.5 cm and 1cm, place, rounded corners=2pt,  draw=black, fill=blue!20, below = of node13, label = above:{node 2}] (node21) {$y_2^{(1)}$};
\node[place, rounded corners=2pt,  draw=black, fill=red!40, below = of node21] (node22)  {$y_3^{(2)}$};
\node[place, rounded corners=2pt,  draw=black, fill=black!10, below = of node22] (node23)  {$s_1 = y_1^{(1)}+y_1^{(2)}$};

\node[node distance =0.5 cm and 1cm, place, rounded corners=2pt,  draw=black, fill=blue!20, below = of node23, label = above:{node 3}] (node31) {$y_3^{(1)}$};
\node[place, rounded corners=2pt,  draw=black, fill=red!40, below = of node31] (node32)  {$y_1^{(2)}$};
\node[place, rounded corners=2pt,  draw=black, fill=black!10, below = of node32] (node33)  {$s_2 = y_2^{(1)}+y_2^{(2)}$};

\node[place, rounded corners=2pt,  draw=black, fill=blue!20, label = above:{newcomer node 1}] (newnode11) at (5,-3.6) {$y_1^{(1)}$};
\node[place, rounded corners=2pt,  draw=black, fill=red!40, below = of newnode11] (newnode12)  {$y_2^{(2)}$};
\node[place, rounded corners=2pt,  draw=black, fill=black!10, below = of newnode12] (newnode13)  {$s_3=y_3^{(1)}+y_3^{(2)}$};

\path (1.2,-3.7) edge[ semithick, post]  (3.8,-3.6);
\path (1.2,-5.5) edge[ semithick, post]  (3.8,-3.6);

\path (1.2,-6.1) edge[ semithick, post]  (3.8,-4.2);
\path (1.2,-2.4) edge[ semithick, post]  (3.8,-4.2);

\path (1.2,-3) edge[ semithick, post]  (3.8,-4.9);
\path (1.2,-4.8) edge[ semithick, post]  (3.8,-4.9);

\end{tikzpicture}}
\end{center}
\caption{We show an example of a failed node repair. 
The repair locality here is $2$ since $2$ remaining nodes are involved in reconstructing the lost information of the first node. 
Observe that we repair a failed node by simply transferring blocks: no block combinations are need to be performed at the sender nodes.
Once the blocks are transferred to a newcomer, a simple XOR suffices for reconstruction.}
\hrulefill
\label{fig:LRC_64_repair}
\end{figure}
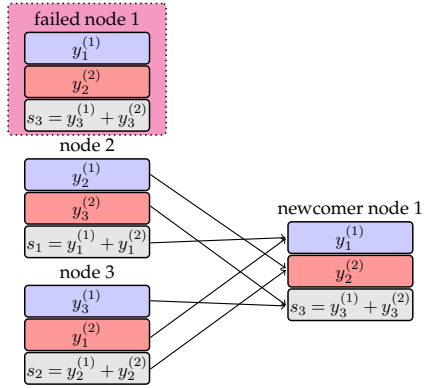

Here, we see that the repair of each lost node requires contacting $r$ nodes, {\it i.e.}, the locality of the code is $r$.
Without loss of generality, we consider the repair of a node in the first repair group of $r+1$ nodes.
This is sufficient since the nodes across different repair groups follow the same placement properties.

The key observation is that each node within a repair group stores $r+1$ blocks of {\it distinct} indices:
 the $r+1$ blocks of a particular index are stored in $r+1$ distinct nodes within {\color{black}a single} repair group.
When for example the first node fails, then $y_1^{(1)}$, the symbol of the first row, is regenerated by downloading $s_1$ from the second node, $y_1^{(r+1)}$ from the third, and so on.
Once all these symbols are downloaded, a simple XOR of all of them is exactly equal to $y_1^{(1)}$.
In the same manner, for each node, in each repair group when we need to reconstruct a lost block, we first download the $r$ remaining blocks of the same index and XOR them together to regenerate the desired lost block.
Since each block can be reconstructed by contacting $r$ other blocks, and since the repair is confined within {\color{black}a single repair group of $r$ remaining nodes}, the code has locality $r$.

In Fig.~\ref{fig:LRC_64_repair}, we show how repair is performed for the code construction presented in Fig.~\ref{fig:code}.

\subsection{Distance and code rate}
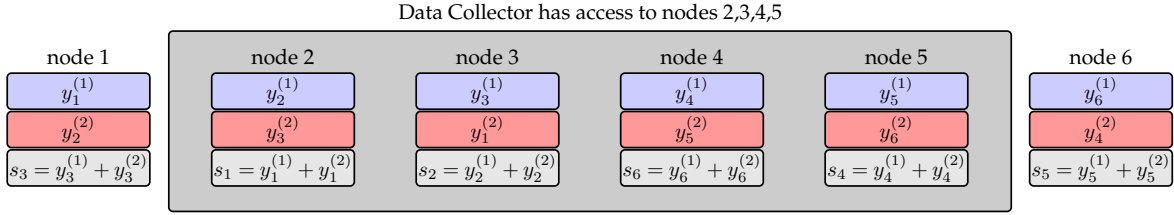
\begin{figure}[h]
\hrulefill
\begin{center}
\scalebox{0.8}{
\begin{tikzpicture}
[node distance =0.01 cm and 1cm, every place/.style={rectangle,  thick,minimum size=6mm, minimum width=23.7mm}, bend angle=23]

\node[place, rounded corners=2pt,  minimum size=30mm, minimum width =140mm, draw=black, fill=black!20, label = above:{Data Collector has access to nodes 2,3,4,5}] (group1) at (8.5,-0.5) {};
\node[place, rounded corners=2pt,  draw=black, fill=blue!20, label = above:{node 1}] (node11) {$y_1^{(1)}$};
\node[place, rounded corners=2pt,  draw=black, fill=red!40, below = of node11] (node12)  {$y_2^{(2)}$};
\node[place, rounded corners=2pt,  draw=black, fill=black!10, below = of node12] (node13)  {$s_3=y_3^{(1)}+y_3^{(2)}$};

\node[place, rounded corners=2pt,  draw=black, fill=blue!20, right = of node11, label = above:{node 2}] (node21) {$y_2^{(1)}$};
\node[place, rounded corners=2pt,  draw=black, fill=red!40, below = of node21] (node22)  {$y_3^{(2)}$};
\node[place, rounded corners=2pt,  draw=black, fill=black!10, below = of node22] (node23)  {$s_1 = y_1^{(1)}+y_1^{(2)}$};

\node[place, rounded corners=2pt,  draw=black, fill=blue!20, right = of node21, label = above:{node 3}] (node31) {$y_3^{(1)}$};
\node[place, rounded corners=2pt,  draw=black, fill=red!40, below = of node31] (node32)  {$y_1^{(2)}$};
\node[place, rounded corners=2pt,  draw=black, fill=black!10, below = of node32] (node33)  {$s_2 = y_2^{(1)}+y_2^{(2)}$};

\node[place, rounded corners=2pt,  draw=black, fill=blue!20, right = of node31, label = above:{node 4}] (node41) {$y_4^{(1)}$};
\node[place, rounded corners=2pt,  draw=black, fill=red!40, below = of node41] (node42)  {$y_5^{(2)}$};
\node[place, rounded corners=2pt,  draw=black, fill=black!10, below = of node42] (node43)  {$s_6=y_6^{(1)}+y_6^{(2)}$};

\node[place, rounded corners=2pt,  draw=black, fill=blue!20, right = of node41, label = above:{node 5}] (node51) {$y_5^{(1)}$};
\node[place, rounded corners=2pt,  draw=black, fill=red!40, below = of node51] (node52)  {$y_6^{(2)}$};
\node[place, rounded corners=2pt,  draw=black, fill=black!10, below = of node52] (node53)  {$s_4=y_4^{(1)}+y_4^{(2)}$};

\node[place, rounded corners=2pt,  draw=black, fill=blue!20, right = of node51, label = above:{node 6}] (node61) {$y_6^{(1)}$};
\node[place, rounded corners=2pt,  draw=black, fill=red!40, below = of node61] (node62)  {$y_4^{(2)}$};
\node[place, rounded corners=2pt,  draw=black, fill=black!10, below = of node62] (node63)  {$s_5=y_5^{(1)}+y_5^{(2)}$};

\end{tikzpicture}}
\end{center}

\caption{
We show how the file can be reconstructed by contacting $k=4$ nodes.
Observe that by accessing {\it any} $k$ nodes, a DC has access to $k$ blocks from the first MDS code and $k$ blocks from the second.
Since the pre-codes are $(n,k)$-MDS, this means that any $k$ blocks from each of the two coded blocks suffice to reconstruct both file parts.
}
\hrulefill
\label{fig:LRC_reconstruction}
\end{figure}

The distance of the presented code is $d=n-k+1$ due to the MDS pre-codes that are used in its design: any $k$ nodes in the system contain $rk$ distinct coded blocks, $k$ from each of the $r$ file blocks. Hence, by performing erasure decoding on each of these $r$ $k$-tuples of blocks, we can generate the $r$ blocks of the file.

When $(r+1)\nmid k$ this distance matches the optimal bound,
$$ n-\left\lceil \frac{M}{\alpha}\right\rceil-\left\lceil\frac{M}{r\alpha}\right\rceil+2 = n-\left\lceil \frac{kr}{r+1}\right\rceil-\left\lceil\frac{k}{r+1}\right\rceil+2=n-k+1,$$
since 
$$\lceil rk/(r+1) \rceil+\lceil k/(r+1) \rceil = k+\lceil- k/(r+1) \rceil+\lceil k/(r+1) \rceil=k+1,$$ when $k/(r+1)$ is not an integer \cite{knuth1989concrete}.
In Fig.~\ref{fig:LRC_reconstruction}, we give a file reconstruction example for the code of Fig. \ref{fig:code}.

Finally, the effective coding rate of  our LRC is
\begin{equation}
R = \frac{\text{size of useful information}}{\text{ total storage spent}}  =\frac{M}{n \cdot \alpha}= \frac{r}{r+1}\frac{k}{n}. \nonumber
\end{equation}
That is, the rate of the code is a fraction $\frac{r}{r+1}$ of the coding rate of an $(n,k)$ MDS code, hence is always upper bounded by
$\frac{r}{r+1}$.
 This loss in rate is incurred due to the use of the extra XOR stripe of blocks, that is required for efficient and local repairs.
Observe that if we set the repair locality to $r=f(k)$ and $f$ is a sub-linear function of $k$ ({\it i.e.}, $\log(k)$ or $\sqrt{k}$), then we obtain non-trivially low locality $r<<k$, while the excess storage cost $\epsilon=\frac{1}{r}$ is vanishing when $n,k$ grow.

\section{Conclusions}
In this work, we {\color{black}presented} {\it locally repairable codes}, a new family of repair efficient codes that optimize the metric of locality.
We analyze what is the best possible reliability in terms of code distance, given the requirement that each coded symbol can be reconstructed by $r$ other symbols in the code.
We provide an information theoretic bound that ties together the code distance, the locality, and the storage cost of a code.
We prove that this bound is achievable using vector-linear codes.
Eventually, we give an explicit construction of LRCs for the case where we require that any $k$ nodes can recover the encoded file.
We show how this explicit construction not only has optimal locality, but also requires small field size and admits very simple XOR based repairs.

%
%

\bibliographystyle{ieeetr}

\bibliography{LRC_IT}

\end{document}